\theoremstyle{definition}
\newtheorem{theorem}{Theorem}
\newtheorem*{theorem*}{Theorem}
\newtheorem*{definition*}{Definition}
\newtheorem{dfn}[theorem]{Definition}
\newtheorem{prp}[theorem]{Proposition}
\newtheorem{thr}[theorem]{Theorem}
\newtheorem{lmm}[theorem]{Lemma}
\newcommand{\dddots}{\mathbin{\rotatebox[origin=c]{15}{$\ddots$}}}
\newcommand{\iiddots}{\mathbin{\rotatebox[origin=c]{65}{$\ddots$}}}
\def\=def{\overset{\text{\small def}}{=}}
\newcommand{\defeq}{\overset{\text{\small def}}{=}}
\def\Zb{\mathbb{Z}}
\def\HH{\hat{H}}
\def\Mc{\mathcal{M}}
\def\Fc{\mathcal{F}}
\newcommand{\T}{\mathsf{T}}
\def\Fb{\mathbb{F}}
\def\BibTeX{{\rm B\kern-.05em{\sc i\kern-.025em b}\kern-.08em
    T\kern-.1667em\lower.7ex\hbox{E}\kern-.125emX}}
\begin{document}

\title{
    {Explicit Construction of Quantum Quasi-Cyclic Low-Density Parity-Check Codes with Column Weight 2 and Girth 12}
}

\author{\IEEEauthorblockN{Daiki Komoto and Kenta Kasai}
\IEEEauthorblockA{\textit{Department of Information and Communications Engineering} \\
\textit{Institute of Science Tokyo}\\
2--12--1 Ookayama, Meguro-ku, 152--8550, Tokyo, Japan \\
\{komoto.d.aa, kasai.k.ab\}@m.titech.ac.jp}
}

\maketitle

\begin{abstract}
    This study proposes an explicit construction method for quantum quasi-cyclic low-density parity-check (QC-LDPC) codes with a girth of 12. 
    The proposed method designs parity-check matrices that maximize the girth while maintaining an orthogonal structure suitable for quantum error correction. 
    By utilizing algebraic techniques, short cycles are eliminated, which improves error correction performance. 
    Additionally, this method is extended to non-binary LDPC codes and spatially-coupled LDPC codes, demonstrating that both the girth and orthogonality can be preserved. 
    The results of this study enable the design of high-performance quantum error-correcting codes without the need for random search. 
\end{abstract}

\begin{IEEEkeywords}
quantum error correction, 
quantum QC-LDPC codes, 
girth 12, 
explicit construction
\end{IEEEkeywords}

\section{{Introduction}}


low-density parity-check (LDPC) codes are a class of error-correcting codes that allow efficient decoding via the Sum-Product (SP) algorithm.
However, it is well known that the presence of short cycles in the Tanner graph defining an LDPC code significantly degrades the error-correction performance of the SP algorithm \cite{mackay1999good}. 
To mitigate this issue, considerable research has been devoted to designing parity-check matrices that maximize the minimum cycle length, referred to as the girth, in the associated Tanner graph. A common approach for constructing LDPC codes with large girth involves randomly generating candidate matrices and selecting those with the largest observed minimum cycle length.


Quasi-cyclic (QC-) LDPC codes \cite{fossorier2004quasicyclic} are 
regular LDPC codes—whose parity-check matrices have constant column and row weights 
\( J \) and \( L \), respectively—constructed using circulant permutation matrices (CPMs), 
which provide structured algebraic properties suitable for hardware implementation.


The girth of QC-LDPC codes is upper-bounded by 12 \cite{fossorier2004quasicyclic}. 
Although many works have aimed to explicitly construct QC-LDPC codes 
while avoiding short cycles \cite{lally2007explicit,tasdighi2016symmetrical,zhang2014deterministic,zhang2024explicit,wang2014explicit}, 
achieving larger girth typically required random search. 
In contrast, a deterministic construction that achieves girth 12 eliminates this step entirely.

The Calderbank-Shor-Steane (CSS) codes \cite{calderbank1996good,steane1996error} form a class of quantum error-correcting codes that are constructed using two classical codes. A fundamental requirement for a CSS code is that its check matrices, \( H_X \) and \( H_Z \), must satisfy the orthogonality condition:  
$H_X H_Z^\T = O$. 
When CSS codes are constructed using classical LDPC codes as component codes, they are referred to as quantum LDPC (QLDPC) codes. Similar to their classical counterparts, QLDPC codes employ the SP algorithm for efficient decoding. Consequently, it is crucial to design the underlying orthogonal classical LDPC matrices while eliminating short cycles to enhance the decoding performance. 
To date, quantum QC-LDPC codes with a girth greater than 4 have been proposed \cite{hagiwara2007quantum}, but due to the difficulty of constructing QC-LDPC matrix pairs while maintaining orthogonality, it has not been possible to increase the girth in the quantum case.

Hagiwara and Imai proposed a construction of orthogonal QC-LDPC matrix pairs with girth greater than 4 in~\cite{hagiwara2007quantum}.  
This construction has been extended to non-binary~\cite{kasai2011quantum} and spatially coupled~\cite{hagiwara2011spatially} formats to improve the performance of quantum LDPC codes; however, none of these constructions have explicitly achieved girth 12.  
Furthermore,~\cite{komoto2024quantum} showed that the finite field extension method proposed in~\cite{kasai2011quantum} is not always applicable to general orthogonal QC-LDPC matrix pairs and clarified the conditions under which it can be applied.  
While these conditions enable non-binary extensions, no SC-NB-QC-LDPC codes with \( J = 2 \) and girth 12 have been proposed to date.


Column weight \( J = 2 \) is particularly important because 
(i) the girth upper bound for quantum QC-LDPC codes is limited to 6 when \( J \ge 3 \)~\cite{amirzade2024girth}, 
(ii) NB-LDPC codes generally perform best at \( J = 2 \)~\cite{poulliat2008design}, 
and (iii) the extension method to orthogonal NB-LDPC matrix pairs in~\cite{kasai2011quantum} applies specifically to \( J = 2 \).

In this study, we first provide an explicit construction method for QC-LDPC matrices with column weight \( J = 2 \), which achieves the upper bound of girth 12, in Subsection \ref{sec:Explicit Construction of Classical QC-LDPC Matrices with Girth 12}. 
In this subsection, we prove that the constructed matrices achieve a girth of 12. 
Furthermore, in Subsection~\ref{sec:quantum-QC-LDPC-matrices} and subsequent sections, we demonstrate that the same approach can be applied to prove girth 12 for other constructions as well.

Next, in Subsection \ref{sec:quantum-QC-LDPC-matrices}, 
we propose a construction method for orthogonal QC-LDPC matrix pairs for quantum LDPC codes. 
Specifically, we describe how to construct orthogonal matrix pairs using the matrices 
introduced in Subsection~\ref{sec:Explicit Construction of Classical QC-LDPC Matrices with Girth 12}. 
This construction method for orthogonal QC-LDPC matrix pairs is based on the method 
proposed by Hagiwara and Imai for orthogonal QC-LDPC matrix pairs \cite{hagiwara2007quantum}.


Furthermore, we show in Section~\ref{sec:finite-field-extension} and Section~\ref{sec:spatial-coupling} that the proposed orthogonal QC-LDPC matrix pairs can be extended to NB-LDPC and SC-LDPC matrix pairs while preserving both orthogonality and girth 12.  
The extension to NB-LDPC matrices is based on the method proposed by Kasai et al.~\cite{kasai2011quantum}, 
and the extension to SC-LDPC matrices is based on the method by Hagiwara et al.~\cite{hagiwara2011spatially}.

The performance evaluation results are almost the same as those presented in \cite{komoto2024quantum}. 




\section{{QC-LDPC Matrices}}\label{sec:QC-LDPC-matrices}
In this paper, we define 
$\Zb_P \defeq \Zb / P\Zb = \{0,\cdots,P-1\}, L/2+\Zb_{L/2} = \{L/2,\cdots,L-1\}$. 
One approach to constructing regular LDPC matrices is to arrange multiple permutation matrices in a structured layout.
Since permutation matrices have a row weight and column weight of 1, a parity-check matrix constructed by arranging \( J \times L \) permutation matrices of size \( P \) results in a \((J,L,P)\)-regular LDPC matrix. 
An arbitrary permutation matrix \( F \) of size \( P \) can be defined using a permutation mapping \( f \), which is a bijective function \( \Zb_P \to \Zb_P \), such that 
$(F)_{f(i),i}=1$. 
We denote the set of permutation matrices of size \( P \) as \( \Mc_P \) and the set of permutation mappings as \( \Fc_P \). Moreover, for a given permutation matrix \( F \in \Mc_P \) and its corresponding permutation mapping \( f \in \Fc_P \), we express their relationship as \( F \sim f \).

A simple example of a permutation matrix is the circulant permutation matrix (CPM). 
A CPM is a permutation matrix defined by a permutation mapping of a specific algebraic form, referred to as a CPM mapping.

\begin{dfn}
    [CPM and QC-LDPC Matrices]  
    For any \( b \in \Zb_P \), the function \( f(x) = x + b \pmod{P} \) defines a permutation mapping.  
    The permutation matrix corresponding to \( f \) is called a CPM.  
    A QC-LDPC code is a classical linear code whose parity-check matrix \( H \) is constructed by arranging \( J \times L \) CPMs.  
    The matrix \( H \) can also be defined using a function array \( F(H) \), where each CPM corresponds to a permutation mapping.  
    Specifically, \( F_{jl} \sim f_{jl} \), and the following representations hold:  
    \begin{align}
        H =
        \begin{bmatrix}
            F_{11} & \cdots & F_{1L} \\
            \vdots & \ddots & \vdots \\
            F_{J1} & \cdots & F_{JL}
        \end{bmatrix}
        , \quad
        F(H) =
        \begin{bmatrix}
            f_{11} & \cdots & f_{1L} \\
            \vdots & \ddots & \vdots \\
            f_{J1} & \cdots & f_{JL}
        \end{bmatrix}.
    \end{align}
\end{dfn}

In general, QC-LDPC matrices are represented by an array of exponents consisting only of \( b \) in the function \( f = x + b \).
However, in this paper, we represent QC-LDPC matrices using the function array \( F(H) \) as shown above.
Furthermore, we will use 0-based indexing for the matrices in this paper.

Hereafter, unless otherwise stated, we assume that \( F \sim f \) and \( G \sim g \) represent CPMs and their corresponding CPM mappings of size \( P \). 
Additionally, while the composition of \( f \) and \( g \) is usually denoted as \( f \circ g \) or \( g \circ f \), for simplicity, we write them as \( fg \) and \( gf \), respectively. 
The product of two CPMs, \( FG \), corresponds to the composition of their respective mappings, \( fg \).

\section{{Cycles in the LDPC Matrix}}\label{sec:Cycles in the LDPC Matrix}
The performance of the sum-product (SP) algorithm significantly deteriorates in the presence of short cycles in the Tanner graph \cite{mackay1999good}. 
To mitigate this issue, it is desirable to construct a parity-check matrix that maximizes the minimum cycle length in the Tanner graph, referred to as the girth. 
In this paper, we simply state that a cycle of length \( n \) exists in \( H \) 
if the Tanner graph defined by the parity-check matrix \( H \) of a classical code \( C \) contains such a cycle.

Since a permutation matrix has a row weight and column weight of 1, the 1-symbols corresponding to adjacent nodes in the Tanner graph appear in different permutation matrix blocks within the same row or column. 
Thus, a cycle in an LDPC matrix can be represented as a sequence of permutation matrix blocks and, equivalently, as a sequence of the corresponding permutation mappings.
Moreover, this sequence can be expressed as a composition of the corresponding permutation mappings, providing an algebraic characterization of cycles in LDPC matrices.

It is known that the girth of a general QC-LDPC matrix is upper-bounded by 12 \cite{fossorier2004quasicyclic}.  

\section{{Orthogonal QC-LDPC Matrix Pairs with Girth 12 for Quantum Error Correction}}
\label{sec:Orthogonal QC-LDPC Matrix Pairs with Girth 12 for Quantum Error Correction}

In this section, we first present the construction method for classical QC-LDPC matrices in Subsection~\ref{sec:Explicit Construction of Classical QC-LDPC Matrices with Girth 12},
which is necessary for constructing quantum QC-LDPC codes, and demonstrate that this construction achieves a girth of 12.
Next, in Subsection~\ref{sec:quantum-QC-LDPC-matrices},
we construct orthogonal QC-LDPC matrix pairs for quantum error correction based on this classical construction.

\subsection{{Explicit Construction of Classical Codes}}
\label{sec:Explicit Construction of Classical QC-LDPC Matrices with Girth 12}

{
While explicit constructions of QC-LDPC matrices with girth at least 6 \cite{lally2007explicit}, 
8 \cite{tasdighi2016symmetrical,zhang2014deterministic,zhang2024explicit}, 
and 10 \cite{wang2014explicit} have been reported in the literature, 
achieving the maximum girth of 12 has typically relied on randomly generating and searching candidate matrices.
}

{
Hereafter, let \( J = 2 \). This is due to three reasons: first, the girth upper bound for quantum QC-LDPC codes with \( J \ge 3 \) is 6; second, the performance of NB-LDPC codes is highest when \( J = 2 \); and third, the finite field extension method for orthogonal QC-LDPC matrix pairs from \cite{kasai2011quantum} targets the case where \( J = 2 \).
In this subsection, we present an explicit construction method for a \( (2, L, P) \)-QC-LDPC matrix that achieves the maximum possible girth of 12.  
By reaching this upper bound, we completely eliminate the need for a random search process. 
}


This construction method can be seen as a special case of the classical QC-LDPC matrix construction 
based on Sidon sequences \cite[Construction I]{zhang2022construction}. 
However, unlike \cite{zhang2022construction}, 
our method provides an explicit and concise specification of the parameters for each block corresponding to the Sidon sequence.

\begin{dfn}
    [Construction of Classical QC-LDPC Matrices]  
    \label{dfn:classicalQC-LDPC-matrix}  
    Let \( L \geq 6 \) be an even integer and \( P \) be a positive integer.  
    Define the set of CPM mappings \( \{f_0,\cdots,f_{L/2-1}\} \in \mathcal{F}_P^{L/2} \) as follows:  
    \begin{align}
        f_l(x) \defeq x + 2^l, \
        g_l(x) \defeq x + 2^{l+L/2}
        \quad \text{for } l \in [L/2]. 
    \end{align}
    Based on this, we construct the \( (2,L/2,P) \)-QC-LDPC matrices \( H, H' \) by defining the function arrays \( F(H), F(H') \) as follows:  
    \begin{align}
        \bigl(F(H)\bigr)_{jl} \defeq f_{-j+l}, 
        \quad
        \bigl(F(H')\bigr)_{jl} \defeq g_{-j+l}
    \end{align}
    for \( j \in \{0,1\} \).  
    Here, the indices of \( f \) are taken modulo \( L/2 \).  
    The resulting \( (2,L,P) \)-QC-LDPC matrix is constructed as \( \HH = [H,H'] \).
\end{dfn}



\begin{thr}
    \label{thr:2section-classicalQC-QLDPC-matrix-achieve-girth12}  
    For any \( L \geq 6 \) and \( P \geq P_0 \defeq 2^{L+1} \), 
    the \( (2, L, P) \)-QC-LDPC matrix \( \HH \) constructed according to Definition \ref{dfn:classicalQC-LDPC-matrix} has a girth of 12.
\end{thr}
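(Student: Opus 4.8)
The plan is to prove that the Tanner graph of $\HH$ has no cycle of length less than $12$, and then to combine this with the known upper bound of $12$ on the girth of any QC-LDPC matrix to conclude that the girth is exactly $12$. The first step I would take is to exploit a structural feature of the case $J=2$: each column block of $\HH$ contributes exactly one $1$ to each of the two row blocks, so every variable node has degree $2$ with its two incident edges landing in distinct row blocks. Suppressing the degree-$2$ variable nodes turns the Tanner graph into a graph on the check nodes that is bipartite, its two parts being the checks of row block $0$ and those of row block $1$. A cycle of length $2m$ in the Tanner graph then corresponds to a simple cycle of length $m$ in this bipartite check graph, which forces $m$ to be even. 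Hence every cycle of $\HH$ has length divisible by $4$, and it suffices to exclude cycles of length $4$ and $8$.

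Next I would convert the existence of a cycle into an additive congruence, using the composition-of-mappings description of cycles from Section~\ref{sec:Cycles in the LDPC Matrix}. Writing $b_{j,l}$ for the shift of the CPM in block $(j,l)$ and setting $d_l \defeq b_{1,l}-b_{0,l}$, the alternation of row blocks along a cycle means that a cycle of length $2m$ through column blocks $l_1,\dots,l_m$ with $l_t\neq l_{t+1}$ can exist only if
\begin{align}
\sum_{t=1}^{m}(-1)^{t-1}d_{l_t}\equiv 0 \pmod P .
\end{align}
From Definition~\ref{dfn:classicalQC-LDPC-matrix} a direct computation gives $b_{0,l}=2^{l}$ for all $l$ and $b_{1,l}=2^{l-1}$ except at the two wrap-around columns, so that
\begin{align}
d_0=2^{L/2-1}-1,\qquad d_{L/2}=2^{L-1}-2^{L/2},\qquad d_l=-2^{\,l-1}\ \text{otherwise}.
\end{align}
In particular every $|d_l|$ is smaller than $2^{L-1}$, and each pairwise sum $d_i+d_j$ lies in the interval $(-2^{L-1},2^{L-1})$.

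I would then rule out the two forbidden lengths. For $m=2$ the left-hand side $d_{l_1}-d_{l_2}$, and for $m=4$ the left-hand side $d_{l_1}+d_{l_3}-d_{l_2}-d_{l_4}$, both have absolute value less than $2^{L}\le P_0\le P$, so each congruence collapses to an integer equation. The case $m=2$ gives $d_{l_1}=d_{l_2}$, which is impossible because the $d_l$ are pairwise distinct, so no $4$-cycle exists. The case $m=4$ gives $d_{l_1}+d_{l_3}=d_{l_2}+d_{l_4}$; the decisive claim is that $\{d_0,\dots,d_{L-1}\}$ is a Sidon set, that is, all of its pairwise sums are distinct. Granting this, $\{l_1,l_3\}=\{l_2,l_4\}$ as multisets, which contradicts the constraints $l_1\neq l_2$ and $l_1\neq l_4$; hence no $8$-cycle exists. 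Together with the reduction of the first paragraph this yields girth $\ge 12$, and the general upper bound then forces the girth to equal $12$.

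The step I expect to be the main obstacle is the Sidon property of $\{d_l\}$, where essentially all of the arithmetic concentrates and where the hypothesis $P\ge 2^{L+1}$ is used in an essential way, since it is exactly what permits the mod-$P$ conditions to be replaced by integer identities. I would prove it by a case analysis resting on uniqueness of binary representations: the $L-2$ ordinary values are pairwise distinct negative powers $-2^{a}$, whose pairwise sums are automatically distinct, while the two exceptional values $2^{L/2-1}-1$ and $2^{L-1}-2^{L/2}$ are the all-ones bit-blocks supported on the low half $\{0,\dots,L/2-2\}$ and the high half $\{L/2,\dots,L-2\}$, respectively. Sorting the pairwise sums by sign and by which half carries their set bits, one checks that no sum involving an exceptional value can coincide with another sum; this finite bookkeeping, combined with the distinctness of the $-2^{a}$, establishes the Sidon property. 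Finally, the cited upper bound already guarantees a cycle of length at most $12$, so a $12$-cycle must be present and the girth equals $12$; alternatively, one could exhibit an explicit length-$12$ cycle to confirm attainment directly.
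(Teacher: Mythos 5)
Your proposal is correct in substance and, interestingly, is more complete than the paper's own proof of this theorem. The paper proves girth 12 only for the half-matrix $H$ (Theorem \ref{thr:classicalQC-LDPC-matrix-achieve-girth12-half}, via Lemmas \ref{lmm:classicalQC-LDPC-matrix-no4cycle} and \ref{lmm:classicalQC-LDPC-matrix-no8cycle}) and then asserts, without argument, that concatenating $H$ and $H'$ into $\HH=[H,H']$ ``ensures that no additional short cycles are introduced''; cycles whose column blocks straddle the two halves are never analyzed, and this is exactly where the stronger hypothesis $P\ge 2^{L+1}$ (rather than $2^{L/2+1}$) is actually needed. You instead treat all $L$ column blocks at once, reduce to cycle lengths divisible by $4$ via the two-row-block structure (which the paper leaves implicit), and package the arithmetic as a Sidon property of the difference set $\{d_l\}$ --- the same binary-representation-plus-magnitude-bound argument the paper uses for the half-matrix, but applied where it is really required. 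Your formulation also aligns with the paper's own remark that the construction is a Sidon-sequence instance of \cite[Construction~I]{zhang2022construction}. Two small corrections: the pairwise sums $d_i+d_j$ do not all lie in $(-2^{L-1},2^{L-1})$, since $2d_{L/2}=2^{L}-2^{L/2+1}$, so the $m=4$ left-hand side is bounded in absolute value by $2^{L+1}=P_0$ rather than by $2^{L}$ --- the conclusion $|{\cdot}|<P$ survives, but the stated constants should be fixed. And in verifying the Sidon property you must check the negative exceptional sums $d_0+d_j=2^{L/2-1}-1-2^{a}$ with $a\ge L/2$ against the sums $-(2^b+2^c)$; the only potential coincidence would require the value $-2^{L/2-1}$ to belong to $\{d_l\}$, and it does not, so the bookkeeping closes, but this case should appear explicitly.
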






Note that the following is true:

\begin{prp}
    \label{prp:L=4-exists-8cycle}
    In the case of $L=4$, the QC-LDPC matrix $\HH$ from Theorem \ref{thr:2section-classicalQC-QLDPC-matrix-achieve-girth12} contains an 8-cycle. 
\end{prp}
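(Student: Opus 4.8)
The plan is to specialize Definition \ref{dfn:classicalQC-LDPC-matrix} to $L=4$, read off the four CPM shifts explicitly, and then exhibit a single closed block-sequence of length eight whose associated shift condition holds exactly. Setting $L=4$ gives $L/2=2$, so $f_0(x)=x+1$, $f_1(x)=x+2$, $g_0(x)=x+4$, $g_1(x)=x+8$, and the $2\times 4$ block array $\HH=[H,H']$ has shift matrix
\begin{align}
    B=\begin{bmatrix} 1 & 2 & 4 & 8 \\ 2 & 1 & 8 & 4 \end{bmatrix},
\end{align}
where the $(j,l)$ entry $b_{jl}$ is the translation constant of the CPM in block row $j\in\{0,1\}$ and block column $l\in\{0,1,2,3\}$.

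First I would invoke the algebraic characterization of cycles from Section \ref{sec:Cycles in the LDPC Matrix}: since every block is a translation, a length-$2k$ cycle through block rows $j_0,\dots,j_{k-1}$ and block columns $l_0,\dots,l_{k-1}$ (with $l_k=l_0$) exists precisely when consecutive rows and columns differ and the alternating shift sum $\sum_{i}(b_{j_i l_i}-b_{j_i l_{i+1}})\equiv 0 \pmod P$ holds, together with a non-degeneracy condition ensuring that the eight incident nodes are distinct. Because $J=2$, any $8$-cycle must alternate its block rows as $0,1,0,1$, so the only remaining freedom is the choice of the four block columns $l_0,l_1,l_2,l_3$.

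Next I would exhibit the column sequence $(l_0,l_1,l_2,l_3)=(0,2,1,3)$ together with the rows $(0,1,0,1)$. Writing $d_l\defeq b_{0l}-b_{1l}$, so that $(d_0,d_1,d_2,d_3)=(-1,1,-4,4)$, the alternating sum collapses to $d_{l_0}-d_{l_1}+d_{l_2}-d_{l_3}=d_0-d_2+d_1-d_3=(-1)-(-4)+1-4=0$, which vanishes exactly, not merely modulo $P$. This cancellation stems from the symmetric swap structure peculiar to $L/2=2$, in which block rows $0$ and $1$ of $H$ carry the transposed shift pairs $(1,2)$ and $(2,1)$ (and likewise for $H'$); it is precisely this symmetry that the genuine cyclic shift destroys for $L\ge 6$. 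Hence the shift condition is satisfied for every $P$, yielding a closed walk of length eight.

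The step I expect to be the real obstacle is the non-degeneracy check, since the alternating-sum condition by itself produces only a closed walk. The four variable nodes lie in the four distinct block columns $0,2,1,3$ and are therefore automatically distinct; for the check nodes I would fix the starting internal index $x_0$ and propagate the relations $y_i=x_i+b_{j_i l_i}$ and $x_{i+1}=y_i-b_{j_i l_{i+1}}$ around the walk, obtaining the two check indices in block row $0$ as $x_0+1$ and $x_0+6$, and the two in block row $1$ as $x_0+5$ and $x_0+2$. These differ provided $5\not\equiv 0$ and $3\not\equiv 0 \pmod P$, which holds for all $P\ge P_0=2^{L+1}=32$, and the final relation closes the walk consistently. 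This exhibits a genuine $8$-cycle in $\HH$, so the girth is at most $8<12$, establishing the claim.
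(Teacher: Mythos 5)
Your proof is correct and takes essentially the same approach as the paper: the paper's proof exhibits exactly the block path you chose (columns $0,2,1,3$ with alternating rows $0,1,0,1$) and checks that the alternating shift sum $2^{0}-2^{2}+2^{3}-2^{0}+2^{1}-2^{3}+2^{2}-2^{1}$ vanishes identically. Your extra explicit non-degeneracy check of the four check-node indices is sound but is already covered by the cycle characterization of Theorem~\ref{thr:2nCycle_myung2006combinig_thr1}, which the paper relies on implicitly.
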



The construction method in Definition~\ref{dfn:classicalQC-LDPC-matrix} corresponds to the structure obtained in \cite[Construction~I]{zhang2022construction} 
when the number of circ-blocks is fixed to 1 and the Sidon sequence is chosen as
$S = \{2^0, 2^1, \dots, 2^{L-1}\}$.


\subsection{{Explicit Construction of Orthogonal QC-LDPC Matrix Pairs}}
\label{sec:quantum-QC-LDPC-matrices}
Previously, explicit constructions for quantum QC-LDPC codes have been limited 
to avoiding cycles of length up to 4 \cite{hagiwara2011spatially}.

In this subsection, we construct an orthogonal LDPC matrix pair \((\HH_X, \HH_Z)\) for quantum QC-LDPC codes. 
Building upon the QC-LDPC matrices constructed in Subsection \ref{sec:Explicit Construction of Classical QC-LDPC Matrices with Girth 12}, 
we now present a method to obtain an orthogonal QC-LDPC matrix pair that achieves girth 12.

\begin{dfn}
    [Construction of Orthogonal QC-LDPC Matrix Pairs]
    \label{dfn:orthogonal-QC-LDPC-matrices}
    Let \( L \geq 6 \) be a positive even integer, and let \( P \) be a positive integer. 
    Define the sets of CPM mappings \( \{ f_0, \dots, f_{L/2-1} \}, \{ g_0, \dots, g_{L/2-1} \} \in \mathcal{F}_P^{L/2} \) as follows:
    \begin{align}
        f_l(x) &\defeq x + 2^l, \quad
        g_l(x) \defeq x + 2^{l + L/2} \quad \text{for } l \in [L/2].
    \end{align}
    Using these mappings, we construct the function arrays \( F(\hat{H}_X) \) and \( F(\hat{H}_Z) \) that define the QC-LDPC matrix pair \( (\hat{H}_X, \hat{H}_Z) \) as follows:
    \begin{align}
        \bigl(F(\hat{H}_X)\bigr)_{jl} &\defeq 
        \begin{cases}
            f_{-j+l}, & 0 \leq l < L/2, \\
            g_{-j + l - L/2}, & L/2 \leq l < L, 
        \end{cases}
        \\
        \bigl(F(\hat{H}_Z)\bigr)_{jl} &\defeq
        \begin{cases}
            g_{j - l}^{-1}, & 0 \leq l < L/2, \\
            f_{j - l + L/2}^{-1}, & L/2 \leq l < L, 
        \end{cases}
        \label{dfn_qunatum_protograph_codes_construction}
    \end{align}
    for \( j \in \{0,1\} \).
    The indices of \( f \) and \( g \) are taken modulo \( L/2 \).
\end{dfn}

\( F(\HH_X) \) and \( F(\HH_Z) \) construct the first row (\( j=1 \)) by shifting the block indices of the zeroth row (\( j=0 \)).  
This arrangement ensures the orthogonality of \( \HH_X \) and \( \HH_Z \) by employing the construction method for orthogonal QC-LDPC matrix pairs proposed by Hagiwara and Imai \cite{hagiwara2007quantum}.  
The construction method in Definition~\ref{dfn:orthogonal-QC-LDPC-matrices} is equivalent to the construction method in \cite{hagiwara2007quantum} with \( \sigma = 2 \) and \( \tau = 1 \). 
Therefore, the orthogonality of the QC-LDPC matrices \( \HH_X \) and \( \HH_Z \) constructed in Definition \ref{dfn:orthogonal-QC-LDPC-matrices} follows from \cite[Proposition 4.1]{hagiwara2007quantum}.

Next, we analyze the cycles in this QC-LDPC matrix pair.  
The matrices \( \HH_X \) and \( \HH_Z \) are graph isomorphic, meaning that they can be transformed into each other solely by row and column permutations.  
As a result, the cycles in \( \HH_X \) and \( \HH_Z \) correspond one-to-one, with each having cycles of the same length. 
Hagiwara and Imai provided a proof for the case \( J = L/2 \) in \cite[Proposition 4.2]{hagiwara2007quantum}.  
We extend this result to cases where \( J \neq L/2 \).

\begin{thr}
    [Graph Isomorphism of Orthogonal QC-LDPC Matrix Pairs, Generalization of {\cite[Proposition 4.2]{hagiwara2007quantum}}]
    \label{thr:H_X-H_Z-equiv}
    The orthogonal \((J,L,P)\)-QC-LDPC matrix pair \((\hat{H}_X,\hat{H}_Z)\) constructed using Definition \ref{dfn:orthogonal-QC-LDPC-matrices} is graph isomorphic.  
    In other words, \( \hat{H}_Z \) can be obtained from \( \hat{H}_X \), and vice versa, by applying the same row and column permutation operations.
\end{thr}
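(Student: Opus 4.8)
The plan is to track every circulant block by its shift and to exhibit explicit, \emph{involutive} row and column permutations that carry $\hat{H}_X$ onto $\hat{H}_Z$. Writing each block as the CPM realizing $x \mapsto x+s \pmod P$, Definition~\ref{dfn:orthogonal-QC-LDPC-matrices} and the shifts $f_l \sim 2^{l}$, $g_l \sim 2^{l+L/2}$ give, for $\hat{H}_X$, the shift $+2^{E^X_{jl}}$ in block $(j,l)$ with exponent $E^X_{jl} = ((l-j)\bmod (L/2)) + \epsilon(l)(L/2)$, where $\epsilon(l)=0$ on the first half of the block columns (the $f$-blocks) and $\epsilon(l)=1$ on the second half (the $g$-blocks). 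Because the entries of $\hat{H}_Z$ are the inverse mappings $g^{-1},f^{-1}$ with the two halves exchanged, its block $(j,l)$ carries the shift $-2^{E^Z_{jl}}$ with $E^Z_{jl} = ((j-l)\bmod(L/2)) + (1-\epsilon(l))(L/2)$. The whole statement then reduces to producing a row permutation $\pi$ and a column permutation $\sigma$ so that permuting the rows of $\hat{H}_X$ by $\pi$ and the columns by $\sigma$ produces $\hat{H}_Z$.

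First I would introduce three elementary operations. (i) Intra-block reversal $N$: the coordinate map $a \mapsto -a \pmod P$ applied simultaneously to rows and columns inside every block; since the CPM for $x\mapsto x+s$ has a $1$ at $(a,b)$ iff $a=b+s$, reversing both coordinates replaces it by the CPM for $x\mapsto x-s$, i.e. it sends every shift $s$ to $-s$. (ii) Block-row reversal $r(j)=J-1-j$. (iii) A block-column permutation $c$ that swaps the two halves and, on each half, reverses the inner index modulo $L/2$, namely $c(l)=(L/2)+((J-1-l)\bmod(L/2))$ for $0\le l<L/2$ and $c(L/2+l_0)=(J-1-l_0)\bmod(L/2)$. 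Each of the three maps is an involution and they act on independent coordinates, so the combined row map $\pi\colon(j,a)\mapsto(r(j),-a)$ and column map $\sigma\colon(l,b)\mapsto(c(l),-b)$ are involutive bijections; the same pair $(\pi,\sigma)$ then realizes both $\hat{H}_X\to\hat{H}_Z$ and $\hat{H}_Z\to\hat{H}_X$, which is the ``and vice versa'' of the statement.

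The core of the argument is the single identity $E^X_{jl}=E^Z_{r(j),c(l)}$ for all $j,l$, together with the sign bookkeeping. The operation $N$ turns the $+2^{E^X}$ shifts into $-2^{E^X}$, matching the negative shifts of $\hat{H}_Z$; the half-swap built into $c$ matches the $f/g$ offset, i.e. the term $\epsilon(l)(L/2)$ of $E^X$ against the term $(1-\epsilon(c(l)))(L/2)$ of $E^Z$. What remains is matching the inner indices modulo $L/2$, which is exactly $(r(j)-c(l))\equiv(l-j)\pmod{L/2}$, equivalently $r(j)+j\equiv c(l)+l\pmod{L/2}$. With $r(j)=J-1-j$ one has $r(j)+j=J-1$ identically, and $c$ is defined precisely so that $c(l)+l\equiv J-1\pmod{L/2}$ on both halves; hence the congruence holds for every $(j,l)$, equal exponents give identical CPM blocks, and a short index chase on the relation $a=b+s$ recovers $\hat{H}_Z$ block by block.

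I expect the main obstacle to be exactly the point where Hagiwara and Imai's $J=L/2$ hypothesis (their Proposition~4.2) enters: a \emph{single} column permutation $c$ must simultaneously realign all $J$ block-rows. The choice natural for $J=L/2$ is a row map of the form $j\mapsto(K-j)\bmod(L/2)$, but for $J\neq L/2$ this need not even permute $\{0,\dots,J-1\}$. The resolution I would emphasize is that the exact reversal $r(j)=J-1-j$ has index sum $r(j)+j=J-1$ that is constant \emph{as an integer}, not merely modulo $L/2$; this decouples the row index from $c$ and lets one column permutation serve every row, whatever the relation between $J$ and $L/2$. The remaining work is purely the bookkeeping of the three modular reductions — the inner index mod $L/2$, the half-swap, and the sign — all of which I would package into the one verification $E^X_{jl}=E^Z_{r(j),c(l)}$ above.
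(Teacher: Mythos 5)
Your proof is correct and takes essentially the same route as the paper: the paper's row permutation $P_R \sim -x+PJ-1$ is exactly your map $(j,a)\mapsto(J-1-j,-a)$, and its column permutation $P_C$ (the two-block anti-diagonal built from $R \sim -x+PJ-1$ on $\Zb_{PL/2}$) is exactly your half-swap combined with the inner reversal $l\mapsto (J-1-l)\bmod (L/2)$ and intra-block negation. Your explicit verification of $E^X_{jl}=E^Z_{r(j),c(l)}$ via the constant index sum $J-1$ just spells out the check that the paper leaves implicit.
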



Next, we present the theorem stating that \( \HH_X \) achieves a girth of 12.  
From Theorem \ref{thr:H_X-H_Z-equiv}, it follows that \( \HH_Z \) has the same girth as \( \HH_X \).

\begin{thr}
    \label{thr:QC-QLDPC-achieve-girth12}
    For any \( L \geq 6 \) and \( P \geq P_0 \defeq 2^{L+1} \), the QC-LDPC matrix \( \HH_X \) constructed according to Definition \ref{dfn:orthogonal-QC-LDPC-matrices} has a girth of 12.
\end{thr}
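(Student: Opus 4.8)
The plan is to avoid any fresh cycle analysis and instead recognize that the matrix $\HH_X$ built in Definition~\ref{dfn:orthogonal-QC-LDPC-matrices} is, block for block, the very matrix $\HH = [H,H']$ produced by the classical Definition~\ref{dfn:classicalQC-LDPC-matrix}. Once that identification is made, the girth claim is immediate from Theorem~\ref{thr:2section-classicalQC-QLDPC-matrix-achieve-girth12}, since girth is a property of the matrix alone. So the whole proof reduces to an index-matching verification.

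First I would observe that the generating mappings are literally the same in both constructions: $f_l(x)=x+2^l$ and $g_l(x)=x+2^{l+L/2}$ for $l\in[L/2]$, and that both $\HH$ and $\HH_X$ are $2\times L$ arrays of CPM blocks with identical parameters $(2,L,P)$ and the identical hypothesis $P\geq P_0 = 2^{L+1}$. Then I would compare the two function arrays entry by entry, splitting the column index into the two ranges that appear in Definition~\ref{dfn:orthogonal-QC-LDPC-matrices}. For $0\le l<L/2$ one has $\bigl(F(\HH_X)\bigr)_{jl}=f_{-j+l}=\bigl(F(H)\bigr)_{jl}$, which is exactly the $f$-half of $\HH$. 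For $L/2\le l<L$, the substitution $l'=l-L/2$ gives $\bigl(F(\HH_X)\bigr)_{jl}=g_{-j+l-L/2}=g_{-j+l'}=\bigl(F(H')\bigr)_{jl'}$, which is precisely column $l'$ of the $g$-half $H'$ of $\HH$. With the indices of $f$ and $g$ read modulo $L/2$ in both definitions, this shows $F(\HH_X)=F([H,H'])=F(\HH)$, hence $\HH_X=\HH$ as matrices.

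Finally, I would invoke Theorem~\ref{thr:2section-classicalQC-QLDPC-matrix-achieve-girth12} with the common hypotheses $L\geq 6$ and $P\geq 2^{L+1}$ to conclude that $\HH_X$ has girth $12$.

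I expect the only delicate point to be the index bookkeeping: confirming that the reduction modulo $L/2$ together with the offset $-L/2$ on the $g$-blocks lines up so that the two definitions genuinely coincide, rather than differing by a reordering of columns. This is where a careful writer should spend the ink. Even in the worst case, a discrepancy could only be a column permutation, which preserves the Tanner-graph girth, so one always falls back to a graph-isomorphism argument; but as the block-by-block comparison shows, the matrices are in fact equal, so no independent cycle counting is required. This is exactly the payoff of having established the classical Theorem~\ref{thr:2section-classicalQC-QLDPC-matrix-achieve-girth12} first.
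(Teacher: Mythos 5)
Your proposal is correct and matches the paper's own proof, which likewise observes that $\HH_X$ is identical to the matrix $\HH$ of Theorem~\ref{thr:2section-classicalQC-QLDPC-matrix-achieve-girth12} and concludes immediately. Your entry-by-entry index verification is simply a more detailed writing-out of the identification that the paper states in one line.
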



Theorem \ref{thr:QC-QLDPC-achieve-girth12} holds for any \( P \ge P_0 = 2^{L+1} \), but it is known that there exist values of \( P < P_0 \) for which this theorem remains valid. In particular, when constructing codes with short block lengths, the minimum value of \( P \) that achieves girth 12 for each \( L \) is of significant importance. Table \ref{tab:minimumPforL-girth12} presents these minimum values. It is observed that the required minimum circulant size \( P \) for obtaining QC-LDPC matrix pairs that achieve girth 12 (as guaranteed by Theorem \ref{thr:QC-QLDPC-achieve-girth12}) is actually much smaller than \( P_0 \).

\begin{table}[htbp]
    \centering
    \caption{For each \( L \geq 6 \), the minimum value of \( P \), denoted as \( P_{\text{min}} \), for which Theorem \ref{thr:QC-QLDPC-achieve-girth12} holds with \( P < P_0 = 2^{L+1} \).}
    \begin{tabular}{c|cccccc}
        \( L \) & 6  & 8  & 10  & 12  & 14  & 16  \\ \hline\hline
        \( P_0 \) & 128 & 512 & 2048 & 8192 & 32768 & 131072 \\ \hline
        \( P_{\text{min}} \) & 49  & 138  & 281  & 355  & 609  & 821  \\ 
    \end{tabular}
    \vspace{5pt} 
    \label{tab:minimumPforL-girth12}
\end{table}

As an example, Figure \ref{fig:orthogonal-QC-LDPC-matrices-L6P49-label} provides an example of \( F(\HH_X) \) and \( F(\HH_Z) \) that define the orthogonal QC-LDPC matrix pair \( (\HH_X, \HH_Z) \) for \( L = 6 \), \( P = 49 \).

\begin{figure*}[htbp]
    \centering
    \begin{alignat}{3}
        F(\HH_X)
        &=
        \left[
        \begin{array}{ccc|ccc}
            x + 2^0 & x + 2^1 & x + 2^2 & x + 2^3 & x + 2^4 & x + 2^5 \\
            x + 2^2 & x + 2^0 & x + 2^1 & x + 2^5 & x + 2^3 & x + 2^4
        \end{array}
        \right]
        &=&
        \begin{bmatrix}
            x + \ 1 \, & x + \ 2 \, & x + \ 4 & x + \ 8 & x + 16 & x + 32 \\
            x + \ 4 \, & x + \ 1 \, & x + \ 2 & x + 32 & x + \ 8 & x + 16
        \end{bmatrix}
        , \\
        F(\HH_Z)
        &=
        \left[
        \begin{array}{ccc|ccc}
            x - 2^3 & x - 2^5 & x - 2^4 & x - 2^0 & x - 2^2 & x - 2^1 \\
            x - 2^4 & x - 2^3 & x - 2^5 & x - 2^1 & x - 2^0 & x - 2^2
        \end{array}
        \right]
        &=&
        \begin{bmatrix}
            x + 41 & x + 17 & x + 33 & x + 48 & x + 45 & x + 47 \\
            x + 33 & x + 41 & x + 17 & x + 47 & x + 48 & x + 45
        \end{bmatrix}&.
    \end{alignat}
    \caption{
    Example of \( F(\HH_X) \) and \( F(\HH_Z) \) for \( L = 6 \), \( P = 49 \).
    }
    \label{fig:orthogonal-QC-LDPC-matrices-L6P49-label}
\end{figure*}


\section{{Finite Field Extension}}
\label{sec:finite-field-extension}
In this section, we extend the orthogonal matrix pair \( (\HH_X,\HH_Z) \in \Fb_2^{PJ \times PL} \) 
constructed in Section \ref{sec:Orthogonal QC-LDPC Matrix Pairs with Girth 12 for Quantum Error Correction} 
to the orthogonal matrix pair \( (H_\Gamma, H_\Delta) \in \Fb_{q=2^e}^{PJ \times PL} \) using the finite field extension method by Kasai et al. \cite{kasai2011quantum}. 
This allows us to explicitly construct an NB-QC-LDPC matrix pair that preserves both orthogonality and the maximum girth of 12.
In general, it is known that the performance of NB-LDPC codes is maximized when \( J=2 \) \cite{poulliat2008design}. 
Therefore, quantum LDPC codes constructed from such orthogonal QC-LDPC matrix pairs with \( J=2 \) and girth 12 are expected to exhibit high decoding performance.


Owing to page limitations, we include only the essential theorems required to support our claims.
For the complete finite field extension method, please refer to \cite{komoto2024quantum}.

The matrix pair \( (H_\Gamma = (\gamma_{jl}),H_\Delta = (\delta_{jl})) \) is constructed by replacing the 1-symbols in \( (\HH_X,\HH_Z) \) with nonzero elements of \( \Fb_{q} \). 
Komoto et al. \cite{komoto2024quantum} provided the conditions that the sets \( \{f_0,\cdots,f_{L/2-1}\}, \{g_0,\cdots,g_{L/2-1}\} \in \mathcal{F}_P^{L/2} \) must satisfy to extend a broader class of matrix pairs over finite fields using the algorithm of \cite{kasai2011quantum} than the QC-LDPC matrix pair in \cite{hagiwara2007quantum}.

\begin{thr}
    [Conditions for Finite Field Extension, \cite{komoto2024quantum}]
    \label{thr:condtion-finite-field-extension}
    If the sets \( \{f_0,\cdots,f_{L/2-1}\}, \{g_0,\cdots,g_{L/2-1}\} \in \mathcal{F}_P^{L/2} \) satisfy the following condition, an orthogonal LDPC matrix pair \( (H_\Gamma, H_\Delta) \in \Fb_{q=2^e}^{PJ \times PL} \) can be obtained by extending the 1-symbols of the orthogonal QC-LDPC matrix pair \( (\HH_X,\HH_Z) \in \Fb_2^{PJ \times PL} \), defined in Definition \ref{dfn:orthogonal-QC-LDPC-matrices}, over a finite field:
    \begin{align}
        f_{l-k} g_{-l+k'}(x) \neq f_{l'-k} g_{-l'+k'}(x), 
    \end{align}
    for $l \neq l' \in [L/2], k,k' \in \{0,1\}$. 
    Here, the indices of \( f,g \) are taken modulo \( L/2 \).
\end{thr}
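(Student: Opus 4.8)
The plan is to reduce the matrix identity \(H_\Gamma H_\Delta^\T = O\) over \(\Fb_q\) to a decoupled family of scalar constraints, and then to show that the stated distinctness condition is exactly what guarantees those constraints can be met by nonzero field coefficients. First I would write the product blockwise: organizing \(H_\Gamma H_\Delta^\T\) into \(J\times J\) blocks of size \(P\times P\), the \((j,j')\) block equals \(\sum_{l=0}^{L-1}\gamma_{jl}\,\delta_{j'l}\,F^X_{jl}(F^Z_{j'l})^\T\), where each \(F^X_{jl}(F^Z_{j'l})^\T\) is again a CPM (a translation), since the transpose of a CPM is its inverse and compositions of translations are translations. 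The key elementary observation is that two CPMs with distinct mappings have disjoint supports; hence the \((j,j')\) block vanishes if and only if, for every translation value \(v\) occurring among the summands, the sum of the scalars \(\gamma_{jl}\delta_{j'l}\) over those \(l\) producing \(v\) is zero in \(\Fb_q\).

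Next I would pin down the multiplicity with which each value \(v\) occurs. The binary orthogonality of \((\HH_X,\HH_Z)\), inherited from the Hagiwara--Imai construction of Definition \ref{dfn:orthogonal-QC-LDPC-matrices}, already tells us that each \(v\) occurs an even number of times in every block. Rewriting the first-half summands as the compositions \(f_{l-j}g_{-l+j'}\) and the second-half summands (after commuting the translations) in the same \(f\cdot g\) form — both carrying the offset \(j'-j\in\{-1,0,1\}\) — I would invoke the hypothesis \(f_{l-k}g_{-l+k'}\neq f_{l'-k}g_{-l'+k'}\) for \(l\neq l'\) to conclude that no value is produced by two different indices within a single half. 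Combined with the even count, this forces each occurring value to appear exactly twice, once from each half, yielding a canonical pairing \(l\leftrightarrow l^{*}\) of the supporting block positions.

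With the pairing in hand, each surviving constraint becomes a single equation \(\gamma_{jl}\delta_{j'l}+\gamma_{jl^{*}}\delta_{j'l^{*}}=0\), which in characteristic two reads \(\gamma_{jl}\delta_{j'l}=\gamma_{jl^{*}}\delta_{j'l^{*}}\). I would then assign the nonzero coefficients by building the constraint graph whose vertices are the \(1\)-symbol positions of \(H_\Gamma\) and \(H_\Delta\) and whose edges are these pairing equations, and argue that a consistent nonzero labelling over \(\Fb_q^{*}\) exists for sufficiently large \(e\) by fixing values on a spanning structure and propagating along edges; this is precisely the content of the Kasai et al.\ assignment referenced in \cite{kasai2011quantum}, and the clean two-fold pairing established above is exactly the input it requires.

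The \textbf{main obstacle} is the global consistency of this coefficient system: the same symbol \(\gamma_{jl}\) participates in pairing equations coming from both values of \(j'\), and likewise each \(\delta_{j'l}\) is shared across the two row blocks \(j\), so the equations are coupled and a priori could force a symbol to take two incompatible values around a cycle of constraints. The role of the distinctness hypothesis is to rule this out: by guaranteeing that the pairing is a genuine perfect matching with multiplicity exactly two, rather than a higher-order coincidence of values, it ensures the constraint graph decomposes into components on which the propagation is unambiguous, so that a nonzero solution always exists. I would therefore spend the bulk of the argument verifying that no contradictory cycle arises, since everything else is bookkeeping on translations.
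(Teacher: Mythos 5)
First, a point of reference: the paper does not prove this theorem at all --- it is imported verbatim from \cite{komoto2024quantum} (the statement is even labelled as such), so there is no in-paper proof to compare your argument against. Judged on its own terms, your combinatorial reduction is correct and is indeed the intended role of the hypothesis: writing the \((j,j')\) block of the product as a sum of translations, observing that distinct translations have disjoint supports, noting that the second-half column blocks reproduce the same set of translation values as the first half (so binary orthogonality gives even multiplicities), and then using the distinctness condition \(f_{l-k}g_{-l+k'}\neq f_{l'-k}g_{-l'+k'}\) to force each value to occur \emph{exactly} twice, once per half. That pairing is precisely what makes the Kasai et al.\ extension applicable, and you have identified it correctly.

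The genuine gap is in your final step. You reduce the problem to satisfying a family of equations \(\gamma_{jl}\delta_{j'l}=\gamma_{jl^{*}}\delta_{j'l^{*}}\) on a constraint graph and then assert that the multiplicity-two pairing ``ensures the constraint graph decomposes into components on which the propagation is unambiguous.'' That does not follow: a degree-two constraint graph is a disjoint union of cycles, and propagating a value around a cycle imposes a nontrivial monodromy condition that the matching structure alone does not resolve --- a priori the product of the transfer ratios around a cycle could force \(\gamma_{jl}\) to equal two different values, or force some label to \(0\). This consistency check is the actual mathematical content of the theorem (it is what \cite{kasai2011quantum} and \cite{komoto2024quantum} establish via an explicit exponent assignment), and your proposal acknowledges it as ``the main obstacle'' without carrying it out. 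A secondary, fixable imprecision: you first write the block product with a single scalar \(\gamma_{jl}\) per CPM block, but the extension of \cite{kasai2011quantum} assigns labels per \(1\)-symbol (per row within each CPM), which is the granularity your own constraint graph later uses; the two bookkeepings give different systems of equations and you should commit to the per-symbol one throughout.
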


We demonstrate that the QC-LDPC matrix pair \( (\HH_X, \HH_Z) \) 
constructed in Definition \ref{dfn:orthogonal-QC-LDPC-matrices} satisfies 
Theorem \ref{thr:condtion-finite-field-extension}, 
meaning that it is extendable to a finite field.

\begin{thr}
    \label{thr:q_nonbiqc_withgirth_12}
    [Finite Field Extension of QC-LDPC Matrix Pair]
    The QC-LDPC matrix pair constructed in Definition \ref{dfn:orthogonal-QC-LDPC-matrices} satisfies the conditions of Theorem \ref{thr:condtion-finite-field-extension}.
\end{thr}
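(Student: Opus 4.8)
The plan is to exploit the fact that every map $f_l$ and $g_l$ in Definition~\ref{dfn:orthogonal-QC-LDPC-matrices} is a pure translation on $\Zb_P$, so that any composition $f_a g_b$ is again a translation whose offset is the sum of the two individual shifts. First I would rewrite the two sides of the required inequality
\begin{align}
    f_{l-k}\,g_{-l+k'}(x) \neq f_{l'-k}\,g_{-l'+k'}(x)
\end{align}
as a statement about their constant offsets. Writing $\alpha \defeq (l-k)\bmod (L/2)$ and $\alpha' \defeq (l'-k)\bmod(L/2)$ for the low indices, and $\beta \defeq ((-l+k')\bmod(L/2)) + L/2$ and $\beta' \defeq ((-l'+k')\bmod(L/2)) + L/2$ for the high indices, the claim reduces to
\begin{align}
    2^{\alpha} + 2^{\beta} \not\equiv 2^{\alpha'} + 2^{\beta'} \pmod{P}
\end{align}
for all $l \neq l' \in [L/2]$ and every fixed $k,k' \in \{0,1\}$.

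Next I would remove the modular reduction. Working under $P \ge P_0 = 2^{L+1}$ (the same hypothesis as Theorem~\ref{thr:QC-QLDPC-achieve-girth12}), each offset is a genuine integer in $[0,2^{L})$, since $2^{\alpha} \le 2^{L/2-1}$ and $2^{\beta} \le 2^{L-1}$ give $2^{\alpha}+2^{\beta} < 2^{L} < P$. Hence the congruence collapses to an equality of integers, and it suffices to show $2^{\alpha}+2^{\beta} \neq 2^{\alpha'}+2^{\beta'}$ in $\Zb$. The decisive structural observation is that $\alpha,\alpha' \in \{0,\dots,L/2-1\}$ while $\beta,\beta' \in \{L/2,\dots,L-1\}$: the two ranges are disjoint, so each offset is an integer with exactly one $1$-bit strictly below position $L/2$ and exactly one at or above it. By uniqueness of the binary representation, equality of the two offsets forces $\alpha = \alpha'$ \emph{and} $\beta = \beta'$ simultaneously.

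Finally I would derive the contradiction from $\alpha = \alpha'$ alone: since $k$ is common to both sides, $\alpha = \alpha'$ means $(l-k) \equiv (l'-k) \pmod{L/2}$, i.e.\ $l \equiv l' \pmod{L/2}$, which for $l,l' \in [L/2]$ forces $l = l'$, contradicting $l \neq l'$. The only step requiring genuine care — and the one I regard as the main obstacle — is justifying that reduction modulo $P$ introduces no spurious coincidences among the sums; this is precisely where the bound $P \ge P_0$ enters. The disjointness of the low and high exponent ranges is what lets the binary-uniqueness argument apply directly, without having to track the roles of $k,k'$ or the $\bmod\,(L/2)$ wraparound beyond the single bijection $l \mapsto (l-k)\bmod(L/2)$.
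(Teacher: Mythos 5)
Your proposal is correct and follows essentially the same route as the paper's proof: expand each composition $f_{l-k}\,g_{-l+k'}$ into a translation by $2^{\alpha}+2^{\beta}$ with $\alpha\in[L/2]$ and $\beta\in L/2+\Zb_{L/2}$, and invoke uniqueness of binary representation over the disjoint exponent ranges. You are in fact somewhat more careful than the paper, which leaves the step ruling out spurious coincidences modulo $P$ (your bound $2^{\alpha}+2^{\beta}<2^{L}<P$ for $P\ge P_0$) entirely implicit.
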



From Theorem \ref{thr:q_nonbiqc_withgirth_12}, it follows that the QC-LDPC matrix pair constructed by Definition \ref{dfn:orthogonal-QC-LDPC-matrices} can be extended to a non-binary form using the method proposed in \cite{kasai2011quantum}.

\section{{Numerical Results}}

We conducted numerical experiments using joint belief propagation decoding over depolarizing channels with error probability~$p$, under the parameters $e = 8$, $J = 2$, $L \in \{8, 10, 12\}$, and $P \in \{32, 128, 1024, 8129\}$. Degenerate errors were not taken into account; decoding was regarded as successful if the noise pattern was correctly estimated, and as a failure otherwise.

The results closely match those presented in Figure~1 of~\cite{komoto2024quantum}. 
Performance plots are omitted due to space constraints, 
but can be provided as supplementary material upon request.

\section{{Spatial Coupling}}
\label{sec:spatial-coupling}
In this section, we apply the orthogonal NB-QC-LDPC matrix pairs constructed 
in Section \ref{sec:finite-field-extension} 
to construct spatially-coupled variants that also achieve girth 12.

Hagiwara et al. \cite{hagiwara2011spatially} proposed a construction method for spatially-coupled QC-LDPC matrix pairs \((H_X, H_Z)\) using QC-LDPC matrix pairs \((\HH_X^{i_c}, \HH_Z^{i_c})\) with coupling number \(n_c\) for \(i_c \in [n_c]\). 
When using the method in \cite{hagiwara2011spatially}, 
cycles of length 8 may form between adjacent sections.
In this study, we address this issue by introducing slight variations across sections, thereby maintaining a girth of 12.

\begin{dfn}
    [Construction of Spatially-Coupled QC-LDPC Matrix Pairs]
    \label{dfn:SC-QC-LDPC-matrices}
    We follow the notation and construction in \cite[Fig.1]{hagiwara2011spatially}. 
    Let \(n_c\) be a positive integer representing the spatial coupling number. 
    We define \(n_c\) NB-LDPC matrix pairs \((H_X^{i_c},H_Z^{i_c})\) as NB-QC-LDPC matrix pairs constructed from the function arrays \(F(\HH_X^{i_c}), F(\HH_Z^{i_c})\) defined below.
    For even \(i_c\),
    \begin{align}
        f_l(x) = x + 2^l
        , \quad
        g_l(x) = x + 2^{l + L/2}
        \quad \text{for } l \in [L/2]. 
    \end{align}
    For odd \(i_c\),
    \begin{align}
        f_l(x) = x + 2^{l + L}
        , \quad
        g_l(x) = x + 2^{l + 3L/2}
        \quad \text{for } l \in [L/2]. 
    \end{align}
    Then, by arranging \(H_X^{i_c},H_Z^{i_c}\) as follows, we construct the expanded parity-check matrix pair \((H_X,H_Z)\) of size \((n_c+1)P/2 \times n_cPL\). 
    Adjacent sections \(H_{X/Z}^{i_c}, H_{X/Z}^{i_c+1}\) for \(i_c \in [n_c-1]\) are arranged to share \(P\) rows. 
    \begin{align}
        H_X &=
        \scriptsize{
        \begin{bmatrix}
            \multirow{2}{*}{\normalsize $H_{X}^0$} &  &  &  \\
             & \multirow{2}{*}{\normalsize $H_{X}^1$} &  &  \\
             &  & \multirow{2}{*}{\normalsize $\dddots$} &  \\
             &  &  & \multirow{2}{*}{\normalsize $H_{X}^{n_c-1}$} \\
             &  &  & 
        \end{bmatrix}}, \\
        H_Z &=
        \scriptsize{
        \begin{bmatrix}
             &  &  & \multirow{2}{*}{\normalsize $H_{Z}^{n_c-1}$} \\
             &  & \multirow{2}{*}{\normalsize $\iiddots$} &  \\
             & \multirow{2}{*}{\normalsize $H_{Z}^1$} &  &  \\
            \multirow{2}{*}{\normalsize $H_{Z}^0$} &  &  &  \\
             &  &  & 
        \end{bmatrix}
        }.
    \end{align}
\end{dfn}

It is clear that the SCNB-QC-LDPC matrix pair \((H_X, H_Z)\) constructed according to Definition \ref{dfn:SC-QC-LDPC-matrices} is orthogonal.

We will show that \((H_X, H_Z)\) maintains girth 12. 
As in the case of non-spatial coupling, \(H_X\) and \(H_Z\) are graph isomorphic. 
Therefore, the girth values of \(H_X\) and \(H_Z\) are identical.

\begin{thr}
    [Graph Isomorphism of Orthogonal SC-QC-LDPC Matrix Pair]
    \label{thr:SC-H_X-H_Z-equiv}
    For the SCNB-QC-LDPC matrix pair \((H_X, H_Z)\) constructed by Definition \ref{dfn:SC-QC-LDPC-matrices}, 
    \(H_X\) and \(H_Z\) are graph isomorphic.
\end{thr}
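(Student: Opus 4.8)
The plan is to realize a single pair of row/column permutations that simultaneously carries out, consistently across all sections, the per-section graph isomorphism already guaranteed by Theorem~\ref{thr:H_X-H_Z-equiv}. Each section pair $(H_X^{i_c},H_Z^{i_c})$ is an instance of Definition~\ref{dfn:orthogonal-QC-LDPC-matrices}, so Theorem~\ref{thr:H_X-H_Z-equiv} applies to it; first I would record that isomorphism in explicit form. Writing each CPM block of $H_X^{i_c}$ (resp.\ $H_Z^{i_c}$) through its shift constant $c^X_{jl}$ (resp.\ $c^Z_{jl}$), Definition~\ref{dfn:orthogonal-QC-LDPC-matrices} builds $F(H_Z^{i_c})$ from the inverses of a reindexed $F(H_X^{i_c})$, so that $c^Z_{\,1-j,\;\pi(l)} = -\,c^X_{jl}$ holds for a fixed column-block permutation $\pi$ of $[L]$ and the row-block swap $j\mapsto 1-j$, \emph{with no additive shift}. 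At the level of node offsets this says $H_Z^{i_c}$ is obtained from $H_X^{i_c}$ by negating all row offsets ($b\mapsto-b$), swapping the two row-blocks, permuting the $L$ column-blocks by $\pi$, and negating all column offsets ($a\mapsto-a$). I would check this form is literally the same for even and odd $i_c$: the two cases differ only by the shift $L$ in the exponent of the base $2$, the index patterns fixing $\pi$ and the swap are unchanged, and $c^Z=-c^X$ stays shift-free.

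Next I would assemble the global permutations. Because the $n_c$ sections occupy disjoint column-blocks and keep their horizontal position in both $H_X$ and $H_Z$, the column permutation $\Pi_c$ may be taken block-diagonal over sections, applying the per-section operation ($\pi$ together with $a\mapsto-a$) inside each section independently, with nothing to reconcile between sections. For rows I would use the \emph{single, section-independent} map $\Pi_r$ that reverses the order of the $n_c+1$ row-blocks, $k\mapsto n_c-k$, and negates the offset $b\mapsto-b$ inside every block. I would then verify $\Pi_r H_X \Pi_c = H_Z$ section by section: section $i_c$ of $H_X$ sits in row-blocks $i_c,\,i_c{+}1$ and column-section $i_c$, and $\Pi_r$ sends these to row-blocks $n_c-i_c,\,n_c-1-i_c$ with the two blocks interchanged --- exactly the swap $j\mapsto1-j$ --- while $\Pi_c$ supplies $\pi$ and the offset negations, so that $H_X^{i_c}$ becomes $H_Z^{i_c}$ placed precisely where the reversed layout of $H_Z$ requires. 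The unshared boundary blocks $k=0$ and $k=n_c$ map to $k=n_c$ and $k=0$, matching the top/bottom blocks of the two layouts.

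The only place the coupling actually matters, and the point to get right, is consistency on the shared row-blocks. In $H_X$ the block $k=i_c{+}1$ carries the $j{=}1$ part of section $i_c$ and the $j{=}0$ part of section $i_c{+}1$ in disjoint column-sections, and both must be sent to the same $H_Z$ block by the \emph{same} offset map. Since the per-section row operation from the first step is the uniform negation $b\mapsto-b$ with no section-dependent shift, the two adjacent sections impose identical offset transformations on their shared block, so $\Pi_r$ is well defined and the two images agree (occupying different column-sections, they never collide). Hence the crux is really the first step --- establishing that $c^Z=-c^X$ holds with \emph{zero} additive constant, uniformly for even and odd sections --- which is exactly what permits replacing the family of per-section row isomorphisms by one reversal-plus-negation. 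With this shift-free form and the boundary bookkeeping in hand, $\Pi_r$ and $\Pi_c$ exhibit $H_X$ and $H_Z$ as graph isomorphic, and the equality of their girths follows as stated.
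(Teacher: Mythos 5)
Your proposal is correct and follows essentially the same route as the paper: the paper's $P'_R = Q' \sim -x+(n_c+1)P-1$ is exactly your global row-block reversal with offset negation, and its $P'_C = \mathrm{diag}(P_C,\dots,P_C)$ is your block-diagonal, per-section column permutation. In fact you supply the consistency check on the shared row-blocks that the paper leaves implicit, so your write-up is a more complete version of the same argument.
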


The construction in Definition \ref{dfn:SC-QC-LDPC-matrices} utilizes the parity-check matrix pair from Definition \ref{dfn:orthogonal-QC-LDPC-matrices}, ensuring that no cycles of size 8 or smaller exist within each section. 
Additionally, it is clear that cycles of size 4 spanning two sections, cycles of size 8 or smaller spanning non-adjacent sections, and cycles of size 8 or smaller spanning three sections do not exist in \(H_X\). Therefore, to prove that \((H_X, H_Z)\) achieves a girth of 12, it suffices to show that no size 8 cycles span adjacent sections in \(H_X\).

\begin{thr}
    \label{thr:spatial_coupling_girth12}
    For any \(P \ge P_1 \defeq 2^L(2^L+1)\), the SC-QC-LDPC matrix \(H_X\) constructed according to Definition \ref{dfn:SC-QC-LDPC-matrices} achieves a girth of 12.
\end{thr}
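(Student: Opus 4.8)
The plan is to build on the reduction stated immediately before the theorem. Since every section is an instance of Definition~\ref{dfn:orthogonal-QC-LDPC-matrices} (possibly with all shifts scaled), each section has girth $12$ by Theorem~\ref{thr:QC-QLDPC-achieve-girth12}, and the remaining cycle types (length $\le 8$ inside a section, length $4$ across two sections, and length $\le 8$ across three sections or across non-adjacent sections) have already been excluded. Hence it suffices to show that $H_X$ contains no $8$-cycle supported in exactly two adjacent sections $H_X^{i_c}$ and $H_X^{i_c+1}$. First I would invoke the standard column-weight-$2$ reduction: contracting each weight-$2$ variable node to an edge turns the Tanner graph into a layered multigraph on check layers $0,\dots,n_c$, in which section $i_c$ contributes, for each block-column $l$, a family of $P$ parallel edges between layer $i_c$ and layer $i_c+1$, all carrying the same offset $d^{(i_c)}_l \defeq s^{\mathrm{bot},i_c}_l - s^{\mathrm{top},i_c}_l \pmod P$, the difference of the two CPM shifts of that block-column. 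Under this reduction an $8$-cycle is exactly a $4$-cycle of the layered graph, and a $4$-cycle meeting two adjacent sections must be the ``bowtie'' through the shared layer: two edges of section $i_c$ meeting at a single outer-layer vertex and two edges of section $i_c+1$ meeting at a single vertex of the opposite outer layer.

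Next I would convert the bowtie into an algebraic condition. Writing the four participating block-columns as $l_1 \ne l_2$ (lower section) and $l_3 \ne l_4$ (upper section), closing the cycle around the shared layer telescopes the offsets into
\[ d^{(i_c)}_{l_1} - d^{(i_c)}_{l_2} \equiv d^{(i_c+1)}_{l_3} - d^{(i_c+1)}_{l_4} \pmod{P}. \]
The key structural observation is that the odd-section shifts are obtained from the even-section shifts by multiplying every power of two by $2^L$, so $d^{(\mathrm{odd})}_l = 2^L \delta_l$ and $d^{(\mathrm{even})}_l = \delta_l$, where $\delta_l$ denotes the common even-section offsets. Since $i_c$ and $i_c+1$ have opposite parity, the displayed identity becomes, up to swapping the two sides,
\[ \delta_{l_1} - \delta_{l_2} \equiv 2^L\bigl(\delta_{l_3} - \delta_{l_4}\bigr) \pmod{P}, \qquad l_1 \ne l_2,\ l_3 \ne l_4. \]

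Finally I would close the argument by a scale-separation estimate. Each $\delta_l$ is a difference of two powers of two with exponents in $[0,L)$, so $|\delta_l| < 2^{L-1}$; hence the left-hand side has absolute value $< 2^L$ and the right-hand side has absolute value $< 2^{2L}$. The two integers therefore differ by less than $2^L + 2^{2L} = P_1 \le P$, so the congruence forces the exact integer equality $\delta_{l_1} - \delta_{l_2} = 2^L(\delta_{l_3} - \delta_{l_4})$. As the right-hand side is an integer multiple of $2^L$ while the left-hand side has absolute value $< 2^L$, both sides must vanish, giving $\delta_{l_3} = \delta_{l_4}$ and $\delta_{l_1} = \delta_{l_2}$. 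A direct computation of the $L$ offsets $\delta_l$ shows they are pairwise distinct, so these equalities force $l_1 = l_2$ and $l_3 = l_4$, contradicting nondegeneracy. Thus no such $8$-cycle exists, and combined with the reductions above together with the general bound that QC girth is at most $12$, $H_X$ has girth exactly $12$.

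I expect the main obstacle to be the bookkeeping in the first two steps: correctly tracking which block-row of each section plays the role of the shared layer (this depends on the staircase orientation in Definition~\ref{dfn:SC-QC-LDPC-matrices}), getting the signs and the direction of the offset differences right when the cycle crosses between a section and its parity-flipped neighbor, and verifying rigorously that the bowtie is the only $4$-cycle topology meeting two adjacent sections (a $3{+}1$ edge split is impossible, since the far outer layer would then contain a degree-one vertex). The magnitude estimate itself is routine once the factor $2^L$ between even and odd sections is isolated, and it is exactly this factor that pins the threshold at $P_1 = 2^L(2^L+1)$.
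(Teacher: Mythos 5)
Your proposal is correct and follows essentially the same route as the paper: reduce (via the reductions stated before the theorem) to ruling out 8-cycles spanning two adjacent sections, express the cycle-closing condition as a signed combination of four offsets from each section with the odd-section offsets equal to $2^L$ times the even-section ones, show it cannot vanish, and bound its magnitude by $2^L + 2^{2L} = P_1$ so that no wrap-around modulo $P$ occurs. The only difference is cosmetic: the paper writes the condition directly as the composition $f^*(x) = x + b^*$ over the eight CPM blocks and argues $b^* \neq 0$ via binary representations, whereas you reach the identical congruence through the weight-2 contraction/bowtie picture and establish nonvanishing by the (arguably cleaner) scale-separation observation that one side is a multiple of $2^L$ while the other is smaller than $2^L$ in absolute value.
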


Based on the above, it has been demonstrated that a quantum LDPC code can be constructed using a classical LDPC matrix pair with the SCNB-QC-LDPC matrix pair, which has a girth of 12, the upper bound for QC-LDPC matrices.

Up to this point, the construction has been done using a base of 2 for the exponents corresponding to each permutation matrix. It has been shown that if the base of the exponent is extended to \( n > 2 \), the same reasoning as for \( n = 2 \) can still achieve a girth of 12, and this can be naturally extended for \( n > 1 \). 
However, the values are \( P_0 = 2n^{L-1} \) and \( P_1 = 2n^{L-1}(n^L + 1) \).

\section{{Conclusion}}

In this paper, we proposed a deterministic method for the explicit construction of orthogonal QC-LDPC matrix pairs for quantum LDPC codes that achieve a girth of 12.
Unlike conventional approaches that rely on random search to obtain such codes, 
our proposed method deterministically constructs the required parity-check matrices.
Furthermore, the proposed construction method can be extended to non-binary and spatially-coupled LDPC codes 
while preserving the girth of 12.
These results contribute to the efficient design of structured quantum error-correcting codes 
and offer a foundation for the realization of fault-tolerant quantum communication and scalable quantum computation.

\bibliographystyle{IEEEtran}
\bibliography{main}

\begin{appendices}
    \section{Proofs in Section \ref{sec:Explicit Construction of Classical QC-LDPC Matrices with Girth 12}}
\begin{dfn}
    [Composition of Permutation Mappings Corresponding to a Block Sequence \( f^* \)]  
    Given an LDPC matrix \( H \) defined by the function chain \( F(H) = (f_{j,l}) \), consider the following sequence of indices:  
    \begin{align}
        [(j_0,l_0), (j_0,l_1), (j_1,l_1), 
        \cdots
        , (j_{n-1},l_n), (j_n,l_n)].
        \label{align_block_route}
    \end{align}
    This sequence represents a path in the function chain where transitions occur within the same row or column.  
    Define \( f_{j_i,l_i} = f_i \) and \( f_{j_i,l_{i+1}} = g_i \).  
    Then, the composition of permutation mappings corresponding to the block sequence, denoted as \( f^* \), is defined as:  
    \begin{align}
        \label{dfn_f*}
        f^*(x) \defeq
        \big(g_n^{-1} f_n g_{n-1}^{-1} f_{n-1} \cdots g_1^{-1} f_1\big) (x).
    \end{align}
\end{dfn}

Cycles present in the parity-check matrix can be represented using the composition of permutation mappings \( f^* \) defined above \cite[Theorem 1]{myung2006combining}.

\begin{thr}
    [Cycle Representation Using the Composition of Permutation Mappings, {\cite[Theorem 1]{myung2006combining}}]  
    \label{thr:2nCycle_myung2006combinig_thr1}  
    Let \( n \) be the smallest integer such that \( (j_0, l_0) = (j_n, l_n) \) in \eqref{align_block_route}.  
    Then, the following two statements are equivalent:  
    \begin{enumerate}
        \item There exists \( x \in \Zb_P \) such that \( f^*(x) = x \).  
        \item The parity-check matrix \( H \) contains a cycle of length \( 2n \) corresponding to the block sequence in \eqref{align_block_route}.  
    \end{enumerate}
\end{thr}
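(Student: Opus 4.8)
The plan is to establish the claimed equivalence by reading off, from an actual cycle in the Tanner graph, the local row and column indices at each visited block, showing that these indices obey precisely the recurrence encoded in $f^*$, so that the fixed-point condition becomes the statement that the walk closes up; the converse then amounts to seeding a walk from a fixed point.

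First I would recall the defining relation of a CPM block: in block $(j,l)$ with mapping $f_{jl}$, the unique $1$ in local column $c$ lies in local row $f_{jl}(c)$, equivalently $c=f_{jl}^{-1}(r)$ for local row $r$. The $2n$ edges of a length-$2n$ cycle alternate between being supported on a common check node and on a common variable node. In the block description \eqref{align_block_route}, a transition $(j_i,l_i)\to(j_i,l_{i+1})$ keeps the row-block fixed, so the two incident $1$-symbols share the local row; a transition $(j_i,l_{i+1})\to(j_{i+1},l_{i+1})$ keeps the column-block fixed, so they share the local column. Writing $c_i$ for the local column of the $1$-symbol entering the $f_i$-block and $r_i=f_i(c_i)$ for its local row, the horizontal step forces $c_{i+1}=g_i^{-1}(r_i)=g_i^{-1}f_i(c_i)$ once the following vertical step preserves that column. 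Iterating gives $c_n=\big(g_{n-1}^{-1}f_{n-1}\cdots g_0^{-1}f_0\big)(c_0)=f^*(c_0)$, with the $1$-based labelling of the statement differing only by an index shift and the juxtaposition $g_i^{-1}f_i$ matching the paper's convention of applying $f_i$ first.

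Next I would identify closure with the fixed-point equation. Since $n$ is minimal with $(j_0,l_0)=(j_n,l_n)$, the walk returns to the starting block for the first time at step $n$, and it closes into the initial $1$-symbol exactly when the local column agrees, $c_n=c_0$, whence $r_n=f_n(c_n)=f_0(c_0)=r_0$ follows automatically. Hence a length-$2n$ cycle of the prescribed block type exists iff $f^*(c_0)=c_0$ for some $c_0\in\Zb_P$, which is statement (1). For the converse I would take any $x$ with $f^*(x)=x$, set $c_0=x$, and define $c_{i+1}=g_i^{-1}f_i(c_i)$ together with $r_i=f_i(c_i)$; these specify a concrete sequence of $1$-symbols joined alternately along shared local rows and columns, and $f^*(x)=x$ forces $c_n=c_0$, so the sequence is a closed walk realizing the block route.

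The main obstacle is the bookkeeping needed to promote this closed walk to a genuine cycle of length exactly $2n$: I must check that the visited check nodes $(j_i,r_i)$ and variable nodes $(l_{i+1},c_{i+1})$ are pairwise distinct and that no edge is immediately retraced. Backtracking is ruled out because each step alternates the row- and column-block, distinct blocks occupy disjoint positions of $H$ so their selected $1$-symbols differ, and the minimality of $n$ forbids an earlier return to the start; the only remaining worry, an accidental coincidence of two non-consecutive visited nodes, is precisely where the minimality hypothesis on $n$ is spent. Once this distinctness is secured, both implications fall out of the single identity $c_n=f^*(c_0)$, completing the equivalence.
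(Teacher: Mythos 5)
The paper does not actually prove this statement: it is imported verbatim as Theorem~1 of \cite{myung2006combining}, so there is no in-paper argument to compare yours against. Your derivation is the standard one behind that cited result --- read off local row and column indices along the walk, observe that alternating row- and column-block transitions force $c_{i+1}=g_i^{-1}f_i(c_i)$, and identify closure of the walk with the fixed-point equation $f^*(c_0)=c_0$. The direction $(2)\Rightarrow(1)$, which is the only direction this paper actually uses (it rules out $4$-, $6$- and $8$-cycles by showing $f^*$ has no fixed point for any admissible block sequence), is complete as you present it.

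The gap is in $(1)\Rightarrow(2)$, exactly at the point you flag and then claim the minimality of $n$ resolves. Minimality only guarantees that the block index pair $(j_0,l_0)$ does not recur before step $n$; it says nothing about two \emph{intermediate} visited nodes coinciding. A check node is determined by the row-block index $j_i$ together with the local row $r_i$ alone, so two distinct blocks $(j_i,l_i)\neq(j_{i'},l_{i'})$ with $j_i=j_{i'}$ can still select the same check node whenever $r_i=r_{i'}$, and the analogous coincidence can occur for variable nodes. In that case the closed walk you seed from a fixed point decomposes into shorter closed walks and need not be a simple cycle of length exactly $2n$. This is a known caveat of the cited theorem; it is repaired either by weakening statement (2) to ``contains a cycle of length at most $2n$'' or by adding a node-disjointness hypothesis on the walk, and neither adjustment affects the way the result is used in this paper.
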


Theorem \ref{thr:2nCycle_myung2006combinig_thr1} is useful for analyzing the existence of cycles in the parity-check matrix. 

\begin{dfn}
    \label{dfn:classicalQC-LDPC-matrix-half}  
    Let \( L \geq 6 \) be a positive even integer and \( P \) be a positive integer.  
    Define a set of CPM mappings \( \{ f_0, \dots, f_{L/2-1} \} \in \mathcal{F}_P^{L/2} \) as follows:  
    \begin{align}
        f_l(x) \defeq x + 2^l
        \quad \text{for } l \in [L/2].
    \end{align}
    Then, we construct a \( (2, L/2, P) \)-QC-LDPC matrix \( H \) by defining the function array \( F(H) \) as follows:  
    \begin{align}
        \bigl(F(H)\bigr)_{jl} \defeq f_{-j+l} \quad \text{for } j \in \{0,1\}. 
    \end{align}
    The indices of \( f \) are considered over \( \mathbb{Z}_{L/2} \).  
\end{dfn}

\begin{thr}
    \label{thr:classicalQC-LDPC-matrix-achieve-girth12-half}  
    For any \( L \geq 6 \) and \( P \geq P_0 \defeq 2^{L/2+1} \), the \( (2, L/2, P) \)-QC-LDPC matrix \( H \) constructed according to Definition \ref{dfn:classicalQC-LDPC-matrix} has a girth of 12.  
\end{thr}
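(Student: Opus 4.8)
The plan is to reduce everything to the algebraic cycle criterion of Theorem~\ref{thr:2nCycle_myung2006combinig_thr1} and to exploit that every block of $H$ in Definition~\ref{dfn:classicalQC-LDPC-matrix-half} is a \emph{pure shift}: the block at position $(j,l)$ is $f_{(l-j)\bmod(L/2)}$, i.e.\ $x\mapsto x+2^{(l-j)\bmod(L/2)}$. Consequently, for any closed block sequence the composition $f^*$ is again a shift, $f^*(x)=x+\Delta$, and Theorem~\ref{thr:2nCycle_myung2006combinig_thr1} says that a cycle of length $2n$ exists if and only if $\Delta\equiv 0\pmod P$. First I would put $\Delta$ in a convenient form. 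Grouping, for each column $l_i$ visited at position $i$ of the walk, the two blocks that touch it — one in row $0$ and one in row $1$, since with $J=2$ the rows must alternate — I expect to obtain
\begin{align}
\Delta=\sum_{i=0}^{n-1}(-1)^i D(l_i),\qquad D(c):=2^{\,c\bmod (L/2)}-2^{\,(c-1)\bmod (L/2)}.
\end{align}
Two structural facts I would record immediately. First, because $J=2$ each vertical step flips the row, so returning to the start row needs an even number of steps; hence $n$ must be even and \emph{no cycles of length $6$ or $10$ can occur at all}. It therefore suffices to exclude $n=2,4$ and to exhibit $n=6$. Second, $D$ is almost a power of two: $D(c)=2^{c-1}$ for $1\le c\le L/2-1$, while the wrap-around value $D(0)=1-2^{L/2-1}$ is the lone negative entry, and in particular $|D(c)|<2^{L/2-1}$ for every $c$.

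The core of the exclusion is a size bound that turns a congruence into an integer identity. Since $|D(c)|<2^{L/2-1}$, for $n\le 4$ we get $|\Delta|<4\cdot 2^{L/2-1}=2^{L/2+1}=P_0\le P$, so $\Delta\equiv 0\pmod P$ is equivalent to $\Delta=0$ in $\Zb$. For $n=2$ this reads $D(l_0)=D(l_1)$; as the values $\{1-2^{L/2-1}\}\cup\{2^{c-1}:1\le c\le L/2-1\}$ are pairwise distinct, $D$ is injective, forcing $l_0=l_1$ and contradicting $l_0\ne l_1$, so there is no $4$-cycle. For $n=4$ it reads $D(l_0)+D(l_2)=D(l_1)+D(l_3)$, and I would split on how many indices equal $0$. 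If none is $0$, the equation equates two sums of two powers of two, so uniqueness of binary representation gives the multiset identity $\{l_0,l_2\}=\{l_1,l_3\}$, which is incompatible with the consecutive-distinctness $l_0\ne l_1,\ l_1\ne l_2,\ l_2\ne l_3,\ l_3\ne l_0$. Whenever some index equals $0$, a sign/magnitude comparison settles it: the side carrying $D(0)<0$ stays negative while the opposite side stays positive (one checks the same for the two admissible double-zero patterns $l_0=l_2=0$ and $l_1=l_3=0$). Hence no $8$-cycle exists, and the girth is at least $12$.

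To pin the girth to exactly $12$ I would exhibit an explicit $12$-cycle. Since $L/2\ge 3$, pick three distinct columns $a,b,c$ and use the column walk $a,b,c,a,b,c$. All consecutive transitions are distinct, the minimal period is $n=6$ (the column pattern has period $3$ and the rows period $2$), and because the even- and odd-position multisets coincide we get $\Delta=D(a)+D(c)+D(b)-D(b)-D(a)-D(c)=0$. By Theorem~\ref{thr:2nCycle_myung2006combinig_thr1} this yields a genuine cycle of length $12$; combined with the lower bound above (or with the general fact that a QC-LDPC girth is at most $12$) the girth equals $12$.

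I expect the main obstacle to be the case analysis ruling out $8$-cycles, specifically handling the wrap-around term $D(0)=1-2^{L/2-1}$: it is the only place where $D$ is not a plain power of two, so the clean uniqueness-of-binary-representation argument does not apply directly and must be replaced by the sign/magnitude comparison. The secondary point requiring care is confirming that the bound $|\Delta|<P_0$ is strict in every sub-case, since the legitimacy of passing from the congruence $\Delta\equiv 0\pmod P$ to the integer equation $\Delta=0$ rests entirely on that strictness.
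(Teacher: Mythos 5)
Your proposal is correct and follows essentially the same route as the paper's proof: express the shift $b^*$ of a closed block walk as a signed sum of the differences $D(l)=2^{l}-2^{l-1}$ (with the wrap-around value at $l=0$), bound $|b^*|<P_0$ so the congruence modulo $P$ becomes an integer equation, and rule out $4$- and $8$-cycles by uniqueness of binary representation together with a sign argument for the $l=0$ term. You additionally spell out two points the paper leaves implicit — that $J=2$ forbids cycles of length $6$ and $10$ by row-parity, and that an explicit $12$-cycle (column walk $a,b,c,a,b,c$) pins the girth to exactly $12$ — which strengthens rather than changes the argument.
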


\begin{proof}
    [Proof of Theorem \ref{thr:classicalQC-LDPC-matrix-achieve-girth12-half}]
    The theorem follows from Lemma \ref{lmm:classicalQC-LDPC-matrix-no4cycle} and Lemma \ref{lmm:classicalQC-LDPC-matrix-no8cycle}.  
\end{proof}

\begin{lmm}  
    \label{lmm:classicalQC-LDPC-matrix-no4cycle}  
    For any \( L \geq 6 \) and \( P \geq P_0' = 2^{L/2} \), the QC-LDPC matrix \( H \) constructed according to Definition \ref{dfn:classicalQC-LDPC-matrix} does not contain any cycles of length 4.  
\end{lmm}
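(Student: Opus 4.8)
The plan is to apply Theorem~\ref{thr:2nCycle_myung2006combinig_thr1} with $n=2$ and reduce the existence of a length-$4$ cycle to a purely arithmetic non-vanishing condition modulo $P$. First I would observe that a $4$-cycle corresponds to a closed block route through two distinct block-rows $j_0\neq j_1$ and two distinct block-columns $l_0\neq l_1$; since here $J=2$, the two rows are forced to be $\{0,1\}$. The cycle then exists iff the associated map $f^*$ from \eqref{dfn_f*}, which is the alternating composition of the four corner mappings and their inverses, has a fixed point.

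The key simplification is that every $f_l$ is the translation $x\mapsto x+2^l \pmod P$, so all these mappings commute and $f^*$ is itself a single translation $x\mapsto x+S$. By Theorem~\ref{thr:2nCycle_myung2006combinig_thr1}, $f^*$ has a fixed point iff $S\equiv 0\pmod P$. From $\bigl(F(H)\bigr)_{jl}=f_{-j+l}$, the block at position $(j,l)$ is the CPM of shift $2^{(l-j)\bmod (L/2)}$, so taking $j_0=0,\ j_1=1$ gives the diagonal-sum form of the standard $4$-cycle condition,
\begin{align}
    S=\bigl(2^{l_0}-2^{(l_0-1)\bmod (L/2)}\bigr)-\bigl(2^{l_1}-2^{(l_1-1)\bmod (L/2)}\bigr).
\end{align}
It then suffices to show $S\not\equiv 0 \pmod P$ for every choice $l_0\neq l_1$ in $[L/2]$.

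I would handle this in two steps. For the magnitude, each of the four terms is a power of two in $\{2^0,\dots,2^{L/2-1}\}$, with two entering positively and two negatively, so $|S|\le 2\cdot 2^{L/2-1}-2 = 2^{L/2}-2 < 2^{L/2}\le P$; hence it is enough to prove $S\neq 0$ as an integer. For that I would split into cases: when $l_0,l_1\ge 1$ there is no wraparound and $S$ collapses to $2^{l_0-1}-2^{l_1-1}$, which is nonzero because $l_0\neq l_1$; when exactly one index is $0$ the exponent $(l-1)\bmod(L/2)$ becomes $L/2-1$, and $S$ reduces to an expression such as $1-2^{l_1-1}-2^{L/2-1}$ whose terms cannot cancel (it is strictly negative), so again $S\neq 0$. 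Combining $S\neq 0$ with $|S|<P$ yields $S\not\equiv 0\pmod P$, so no $4$-cycle exists. The main obstacle is the bookkeeping around the wraparound index $(l-1)\bmod(L/2)$ in the boundary cases $l_0=0$ or $l_1=0$, together with making the magnitude bound tight enough to match exactly $P_0'=2^{L/2}$; once the commutativity reduction and the binary (powers-of-two) uniqueness are in hand, the remaining case analysis is routine.
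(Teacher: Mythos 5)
Your proposal is correct and follows essentially the same route as the paper: reduce the $4$-cycle condition via Theorem~\ref{thr:2nCycle_myung2006combinig_thr1} to a single translation $x\mapsto x+b^*$, then show $b^*\neq 0$ as an integer and $|b^*|<2^{L/2}\le P$. The only cosmetic difference is that the paper argues $b^*\neq 0$ by uniqueness of binary representations of $2^{l}+2^{l'-1}$ versus $2^{l-1}+2^{l'}$, whereas you collapse $2^{l}-2^{l-1}$ to $2^{l-1}$ and treat the wraparound index by sign; both are sound.
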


\begin{proof}
    [Proof of Lemma \ref{lmm:classicalQC-LDPC-matrix-no4cycle}]
    Consider the path corresponding to four circulant permutation matrix (CPM) blocks in \( H \), which corresponds to the transformation \( f^*(x) \defeq x + b^* \). The value \( b^* \) can be expressed as follows:  
    \begin{align}
        b^*
        &= b_{l} - b_{l'} + b_{l'-1} - b_{l-1}
        \\&= (b_{l} + b_{l'-1}) - (b_{l-1} + b_{l'}).
    \end{align}
    Here,  
    \begin{align}
        (b_{l}, b_{l-1}) =
        \begin{cases}
            (2^0, 2^{L/2-1}), & l = 0 \\
            (2^{l}, 2^{l-1}), & \text{otherwise}.
        \end{cases}
    \end{align}
    Additionally, \( l \neq l' \).  
    We need to show that \( b^* \neq 0 \) and \( |b^*| < P_0' \).  
    \begin{align}
        (b_{l} + b_{l'-1}, b_{l-1} + b_{l'}) = (2^l + 2^{l'-1}, 2^{l-1} + 2^{l'}).
    \end{align}
    Considering the binary representation, since \( b_{l} + b_{l'} \neq b_{l-1} + b_{l'-1} \), it follows that \( b^* \neq 0 \).  
    Furthermore, we obtain the following bound:  
    \begin{align}
        |b^*|
        &\leq |2^0 - 2^{L/2-1} + 2^{L/2-2} - 2^{L/2-1}|
        \\&= |2^0 + 2^{L/2-2} - 2^{L/2}|
        \\&< 2^{L/2} = P_0'.
    \end{align}
\end{proof}

\begin{lmm}  
    \label{lmm:classicalQC-LDPC-matrix-no8cycle}  
    For any \( L \geq 6 \) and \( P \geq P_0 = 2^{L/2+1} \), the QC-LDPC matrix \( H \) constructed according to Definition \ref{dfn:classicalQC-LDPC-matrix} does not contain any cycles of length 8.  
\end{lmm}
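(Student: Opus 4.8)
The plan is to apply Theorem~\ref{thr:2nCycle_myung2006combinig_thr1} with $n=4$: an $8$-cycle exists in $H$ exactly when some block sequence with $n=4$ minimal yields a composite mapping $f^*$ having a fixed point. Since every entry $f_{-j+l}$ of $F(H)$ is a CPM, the composite $f^*$ in \eqref{dfn_f*} is again a pure shift $f^*(x) = x + b^*$, and a fixed point exists iff $b^* \equiv 0 \pmod P$. So it suffices to prove $b^* \not\equiv 0 \pmod P$ for every length-$8$ block sequence. Because $H$ has only two block-rows, the row index must alternate along the walk; writing the four visited block-columns as $a,b,c,d \in \Zb_{L/2}$ and recalling that block $(j,l)$ carries the shift $2^{(l-j)\bmod(L/2)}$, expanding \eqref{dfn_f*} collapses $b^*$ into a signed sum of eight powers of two. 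Introducing $\phi(x) \defeq 2^{x} - 2^{(x-1)\bmod(L/2)}$ (so $\phi(x) = 2^{x-1}$ for $1 \le x \le L/2-1$ while $\phi(0) = 1 - 2^{L/2-1}$), both choices of the starting row give, up to an overall sign,
\begin{align}
    b^* = \phi(a) + \phi(c) - \phi(b) - \phi(d).
\end{align}

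First I would record the size bound. Since $|\phi(x)| \le 2^{L/2-1}-1$ for every $x$, we get $|b^*| \le 2^{L/2+1}-4 < P_0 \le P$, so that $b^* \equiv 0 \pmod P$ is equivalent to $b^* = 0$ in $\Zb$. It then remains to show that $b^*=0$ can occur only for a degenerate sequence, i.e. one with $l_i = l_{i+1}$ for some $i$; such a coincidence forces the factor $g_i^{-1}f_i$ to be the identity and thereby contradicts the minimality of $n=4$ in Theorem~\ref{thr:2nCycle_myung2006combinig_thr1}. In the cyclic list of row-moves this means no two cyclically adjacent entries of $(a,b,c,d)$ may coincide.

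The core is a case analysis on how many of $a,b,c,d$ equal $0$, i.e. on how many exponents wrap around. When none is $0$, every $\phi$ value is a single power of two and $b^*=0$ reads $2^{a-1}+2^{c-1} = 2^{b-1}+2^{d-1}$; since a sum of two powers of two determines its exponents as a multiset, this forces $\{a,c\}=\{b,d\}$, and each resolution ($a=b,\,c=d$ or $a=d,\,b=c$) makes two cyclically adjacent columns coincide, the excluded case. When exactly one of the four is $0$, the side carrying $\phi(0)$ is strictly negative while the other side is strictly positive (this is where $L\ge 6$ enters, via $1-2^{L/2-2}<0$), so $b^*\neq 0$ outright. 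The remaining multi-zero cases split identically: two zeros on the same side again produce a sign mismatch, while any zero on each side, and every configuration with three or four zeros, places two zeros in cyclically adjacent positions and is excluded.

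The main obstacle I anticipate is the bookkeeping around the wrap term $\phi(0) = 1 - 2^{L/2-1}$: it both breaks the clean telescoping used in the zero-free case and, through its sign, rescues the single-zero case, so the case split must be made genuinely exhaustive. The delicate point is to pair each cancellation pattern with a forced repetition $l_i = l_{i+1}$ and to verify that no configuration is simultaneously non-degenerate and cancelling; the two-powers-of-two uniqueness statement together with the $L\ge 6$ sign estimate are the two facts that close the argument.
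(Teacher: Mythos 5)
Your proof is correct, and it reaches the paper's two target facts ($b^* \neq 0$ and $|b^*| < P_0$) by the same overall route: Theorem~\ref{thr:2nCycle_myung2006combinig_thr1} with $n=4$, reduction of $f^*$ to a pure shift, and an argument based on uniqueness of sums of powers of two. Where you differ is in the grouping and in the level of detail. The paper collects the four positive and four negative terms of $b^*$ separately and then disposes of both claims with a one-line deferral to ``the same argument as in Lemma~\ref{lmm:classicalQC-LDPC-matrix-no4cycle}''; comparing two sums of \emph{four} powers of two each is more delicate than the two-term case because exponents can repeat and carries can occur, and the paper does not spell out how the wrap-around term $2^{L/2-1}$ (arising when some $l_i=0$) is handled. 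Your telescoping $\phi(l)=b_l-b_{l-1}$, which pairs the two entries of each visited block-column, collapses $b^*$ to a signed sum of only four quantities, each a single power of two except for the wrap term $\phi(0)=1-2^{L/2-1}$, and your exhaustive case split on the number and placement of zero indices (two-power multiset uniqueness in the zero-free case; the sign argument using $L\ge 6$ when a wrap term appears on one side; cyclic adjacency excluding the remaining configurations) genuinely closes the gaps the paper leaves implicit. Your bound $|b^*|\le 4(2^{L/2-1}-1)<P_0$ is also cleaner than the paper's. In short: same skeleton, but your write-up is the more complete argument, and the $\phi$-grouping is the device that makes the repeated-exponent and wrap-around bookkeeping tractable.
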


\begin{proof}
    [Proof of Lemma \ref{lmm:classicalQC-LDPC-matrix-no8cycle}]
    The function \( f^*(x) \) corresponding to the path through the eight circulant permutation matrix (CPM) blocks in \( \HH_X \) is given by:  
    \begin{align}  
        b^*  
        =& b_{l_0} - b_{l_1} + b_{l_1-1} - b_{l_2-1} + b_{l_2} - b_{l_3} + b_{l_3-1} - b_{l_0-1}  
        \\=& (b_{l_0} + b_{l_1-1} + b_{l_2} + b_{l_3-1})  
        \\& - (b_{l_0-1} + b_{l_1} + b_{l_2-1} + b_{l_3}).  
    \end{align}
    Here,  
    \begin{align}  
        (b_{l_i}, b_{l_i-1}) =  
        \begin{cases}  
            (2^0, 2^{L/2-1}), & l_i = 0 \\  
            (2^{l_i}, 2^{l_i-1}), & \text{otherwise},  
        \end{cases}  
    \end{align}  
    for \( i = 0, \dots, 3 \). 
    Furthermore, \( l_i \neq l_{i+1} \) for \( i \in \mathbb{Z}_4 \).  
    To prove the lemma, we need to show that \( b^* \neq 0 \) and \( |b^*| < 2^{L/2+1} \).  
    These properties can be established using the same argument as in Lemma \ref{lmm:classicalQC-LDPC-matrix-no4cycle}.  
\end{proof}

\begin{proof}
    [Proof of Theorem \ref{thr:2section-classicalQC-QLDPC-matrix-achieve-girth12}]
    The proof follows directly from Theorem \ref{thr:classicalQC-LDPC-matrix-achieve-girth12-half}. Since both \(H\) and \(H'\) are constructed using the method described in Definition \ref{dfn:classicalQC-LDPC-matrix}, they individually achieve a girth of 12. 

    Moreover, when concatenated to form \(\HH = [H, H']\), the structure ensures that no additional short cycles (of length less than 12) are introduced. The girth of the resulting matrix remains 12, satisfying the theorem’s claim.
\end{proof}

\begin{proof}
    [Proof of Proposition \ref{prp:L=4-exists-8cycle}]
    Consider the index sequence of the block path as follows:
    \begin{align}
        [(0,0),(0,2),(1,2),(1,1),(0,1),(0,3),(1,3),(1,0),(0,0)].
    \end{align}
    The corresponding $f^*$ for this index sequence is given as:
    \begin{align}
        f^*(x)
        =& x + b_{0,0} - b_{0,2} + b_{1,2} - b_{1,1}
        \\& + b_{0,1} - b_{0,3} + b_{1,3} - b_{1,0}
        \\=& x + 2^{0} - 2^{2} + 2^{3} - 2^{0} + 2^{1} - 2^{3} + 2^{2} - 2^{1}
        \\=& x.
    \end{align}
\end{proof}

\section{Proofs in Section \ref{sec:Orthogonal QC-LDPC Matrix Pairs with Girth 12 for Quantum Error Correction}}
\begin{proof}
    [Proof of Theorem \ref{thr:H_X-H_Z-equiv}]
    Define \( P_R \) and \( P_C \) as follows:
    \begin{align}
        P_R \defeq Q, \quad
        P_C \defeq \left[
        \begin{array}{cc}
            O & R  \\
            R & O
        \end{array}
        \right].
    \end{align}
    Here, \( Q \in \mathcal{M}_{PJ} \sim - x + PJ - 1 \in \mathcal{F}_{PJ} \), and \( R \in \mathcal{M}_{PL/2} \sim - x + PJ-1 \in \mathcal{F}_{PL/2} \) are defined accordingly. 
    Then, the following holds:
    \begin{align}
        \hat{H}_Z = P_R \hat{H}_X P_C, \quad
        \hat{H}_X = P_R \hat{H}_Z P_C.
    \end{align}
\end{proof}

\begin{proof}
    [Proof of Theorem \ref{thr:QC-QLDPC-achieve-girth12}]
    The parity-check matrix \( \HH_X \) is identical to the matrix \( \HH \) in Theorem \ref{thr:2section-classicalQC-QLDPC-matrix-achieve-girth12}. Therefore, the result follows directly from Theorem \ref{thr:2section-classicalQC-QLDPC-matrix-achieve-girth12}.
\end{proof}

\section{Proofs in Section \ref{sec:finite-field-extension}}
\begin{proof}
    [Proof of Theorem \ref{thr:q_nonbiqc_withgirth_12}]
    Expanding $f_l g_{-l+k}$ for each $k=0,\pm 1$ with $l \in [L/2]$, we obtain the following expressions:
    \begin{align}
        f_l g_{-l}(x) &= x + 2^{l} + 2^{L-l}, \\
        f_l g_{-l-1}(x) &= x + 2^{l} + 2^{L-l-1}, \\
        f_l g_{-l+1}(x) &= x + 2^{l} + 2^{L-l+1}.
    \end{align}
    Here, the exponents of the last terms in each expression, namely $2^{L-l},2^{L-l-1},2^{L-l+1}$, are defined on $L/2+\Zb_{L/2}$. 
    Thus, the conditions of Theorem \ref{thr:condtion-finite-field-extension} are satisfied.
\end{proof}

\section{Proofs in Section \ref{sec:spatial-coupling}}
\begin{proof}
    [Proof of Theorem \ref{thr:SC-H_X-H_Z-equiv}]
    Define \( P'_R \) and \( P'_C \) as follows. 
    \( P'_C \) has \( P_C \) repeated \( n_c \) times along the diagonal.
    \begin{align}
        P'_R &\defeq Q', \quad
        P'_C \defeq
        \begin{bmatrix}
            P_C & & \\
             & \ddots & \\
             & & P_C
        \end{bmatrix}.
    \end{align}
    Here, \( Q' \in \Mc_{(n_c+1)P} \sim - x + (n_c+1)P - 1 \in \Fc_{(n_c+1)P} \), and \( P_C \) is defined in the proof of Theorem \ref{thr:H_X-H_Z-equiv}.
    Then, the following calculations hold.
    \begin{align}
        H_Z &= P'_R H_X P'_C, \quad
        H_X = P'_R H_Z P'_C.
    \end{align}
\end{proof}

\begin{proof}
    [Proof of Theorem \ref{thr:spatial_coupling_girth12}]
    This is proven by Theorem \ref{thr:QC-QLDPC-achieve-girth12} and Lemma \ref{lmm:adjacent-section-no8cycle} provided below.
\end{proof}

\begin{lmm}
    \label{lmm:adjacent-section-no8cycle}
    The LDPC matrix \( H_X \) constructed by Definition \ref{dfn:SC-QC-LDPC-matrices} does not contain 8-cycles between adjacent sections.
\end{lmm}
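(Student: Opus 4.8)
The plan is to reduce the statement to the single shift-sum computation of Theorem~\ref{thr:2nCycle_myung2006combinig_thr1} and then exploit the fact that adjacent sections are built from disjoint ranges of powers of two. First I would fix an adjacent pair of sections \(i_c, i_c+1\). By the discussion preceding the lemma, every 8-cycle not already excluded must live inside the three-block-row window these two sections occupy, meeting only along the single shared block-row. Since section \(i_c\) has nonzero entries only in its two block-rows and section \(i_c+1\) only in its two, the row sequence of any candidate 8-cycle is a closed walk of length four on the path graph on these three rows. Counting crossings of the two cuts shows that such a walk must use the "lower" edge and the "upper" edge an even number of times each; the cases \((4,0)\) and \((0,4)\) give cycles contained in a single section, which are excluded by Theorem~\ref{thr:QC-QLDPC-achieve-girth12}, leaving only the genuinely spanning case \((2,2)\), in which each edge is used twice.

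In that case the cycle splits into two contiguous arcs: a four-block arc using exactly two columns of section \(i_c\), and a four-block arc using exactly two columns of section \(i_c+1\), each arc entering and leaving at the shared block-row. Writing \(f^*(x)=x+b^*\) as in Theorem~\ref{thr:2nCycle_myung2006combinig_thr1}, I would split \(b^*=A+B\), where \(A\) collects the four signed shifts from section \(i_c\) and \(B\) the four from section \(i_c+1\). Because each arc enters and exits on the shared row, the two blocks lying in a common column occur consecutively along the cycle, hence with opposite sign; each column therefore contributes \(\pm\psi(l)\), where \(\psi(l)\defeq b_{1,l}-b_{0,l}\) is the per-column row-difference within a section, and each arc contributes a difference. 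Thus \(A=\pm(\psi(l_a)-\psi(l_b))\) with \(l_a\neq l_b\), and \(B\) has the analogous form in the shifts of section \(i_c+1\).

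The key arithmetic step is that, for even \(i_c\), the shifts of section \(i_c\) are powers in \(\{2^0,\dots,2^{L-1}\}\) while those of section \(i_c+1\) are powers in \(\{2^L,\dots,2^{2L-1}\}\) (and symmetrically for odd \(i_c\)). Hence \(A\) is an alternating sum of four ``small'' powers with two positive and two negative terms, so \(|A|<2^L\), whereas \(B\) is an integer multiple of \(2^L\). I would then confirm, by evaluating \(\psi\) explicitly and handling the index wrap-around at \(l=0\) and \(l=L/2\), that \(\psi\) is injective on the \(L\) columns of a section; this is the section-local analogue of the 4-cycle-freeness established in Lemma~\ref{lmm:classicalQC-LDPC-matrix-no4cycle}, and it forces \(A\neq0\). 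Combining the two facts rules out \(A+B=0\): if \(B=0\) then \(b^*=A\neq0\), while if \(B\neq0\) then \(|B|\geq 2^L>|A|\), so again \(b^*\neq0\). Finally the magnitude bound \(|b^*|\leq|A|+|B|<2^L+2^{2L}=2^L(2^L+1)=P_1\leq P\) gives \(b^*\not\equiv0\pmod P\), so by Theorem~\ref{thr:2nCycle_myung2006combinig_thr1} no such 8-cycle exists.

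I expect the main obstacle to be the structural enumeration of the first step: proving rigorously that a spanning 8-cycle must decompose into exactly two four-block arcs with two columns per section (rather than some more exotic configuration), together with the bookkeeping needed to verify the opposite-sign, difference form of each arc's contribution. Once that is in place, the disjoint-range separation of \(b^*=A+B\) and the explicit injectivity check for \(\psi\) are routine, the latter requiring care only at the two wrap-around columns.
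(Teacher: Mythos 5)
Your proposal is correct and follows essentially the same route as the paper's proof: reduce to the shift sum \(b^*\) via Theorem~\ref{thr:2nCycle_myung2006combinig_thr1}, exploit the fact that adjacent sections draw their shifts from disjoint ranges of powers of two, and bound \(|b^*| < 2^L(2^L+1) = P_1\). You are in fact more explicit than the paper on the two points it leaves implicit — the structural enumeration showing a spanning 8-cycle must split into two four-block arcs, and the \(b^* = A+B \neq 0\) argument via \(|A| < 2^L \le |B|\) — but the underlying mechanism is identical.
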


\begin{proof}
    [Proof of Lemma \ref{lmm:adjacent-section-no8cycle}]
    It suffices to show that there is no 8-cycle between the sections \( i_c = 0 \) and \( i_c = 1 \). 
    The path corresponding to the 8 permutation matrix blocks spanning these two sections can be expressed as follows, with respect to \( f^*(x) = x + b^* \):
    \begin{align}
        b^*
        =& b_{l_0} - b_{l_1} + b_{l_1-1} - c_{l_2-1} + c_{l_2} - c_{l_3} + c_{l_3-1} - b_{l_0-1} \\
        =& (b_{l_0} + b_{l_1-1} + c_{l_2} + c_{l_3-1}) \\
        &- (b_{l_0-1} + b_{l_1} + c_{l_2-1} + c_{l_3}).
    \end{align}
    Here,
    \begin{align}
        (b_{l_i}, b_{l_i-1}) &=
        \begin{cases}
            (2^0, 2^{L/2-1}), & l_i = 0 \\
            (2^{L/2}, 2^{L-1}), & l_i = L/2 \\
            (2^{l_i}, 2^{l_i-1}), & \text{otherwise},
        \end{cases} \\
        (c_{l_i}, c_{l_i-1}) &=
        \begin{cases}
            (2^L, 2^{3L/2-1}), & l_i = 0 \\
            (2^{3L/2}, 2^{2L-1}), & l_i = L/2 \\
            (2^{l_i+L}, 2^{l_i+L-1}), & \text{otherwise},
        \end{cases}
    \end{align}
    for \( i = 0, \cdots, 3 \), and \( x + b_l \) corresponds to section \( i_c = 0 \), while \( x + c_l \) corresponds to section \( i_c = 1 \). 
    The fact that \( b^* = 0 \) can be demonstrated as in Theorem \ref{thr:classicalQC-LDPC-matrix-achieve-girth12-half}.
    Moreover, 
    \begin{align}
        |b^*|
        \le & |2^{L/2} - 2^{L-1} + 2^{L-2} - 2^{L-1} \\
        &+ 2^{3L/2} - 2^{2L-1} + 2^{2L-2} - 2^{2L-1}| \\
        =& |2^{L/2} + 2^{L-2} - 2^L + 2^{3L/2} + 2^{2L-2} - 2^{2L}| \\
        <& 2^L (2^L + 1) \\
        =& P_1.
    \end{align}
\end{proof}
\end{appendices}

\end{document}